\declaretheorem[name=Lemma]{lem}
\declaretheorem[name=Observation]{obs}
\numberwithin{equation}{section}
\newtheorem{theorem}{Theorem}
\newtheorem{corollary}{Corollary}
\theoremstyle{definition}
\newtheorem{definition}{Definition}
\DeclareMathOperator{\Tr}{Tr}
\DeclareMathOperator{\I}{I}
\DeclareMathOperator{\relint}{relint}
\newcommand{\C}{\mathbb{C}}
\newcommand{\R}{\mathbb{R}}
\newcommand{\cH}{\mathcal{H}}
\newcommand{\PH}{\mathcal{P}(\mathcal{H})}
\newcommand{\EH}{\mathcal{E}(\mathcal{H})}
\newcommand{\ES}{E(\mathcal{S})}
\newcommand{\E}{\mathcal{E}}
\newcommand{\cS}{\mathcal{S}}
\newcommand{\ef}{{\boldsymbol{e}}}
\newcommand{\ff}{{\boldsymbol{f}}}
\newcommand{\bo}{{\boldsymbol{\omega}}}
\newcommand{\bu}{{\boldsymbol{u}}}
\newcommand{\bx}{\boldsymbol{x}}
\newcommand{\by}{\boldsymbol{y}}
\newcommand{\bO}{\boldsymbol{0}}
\newcommand{\bh}{{\boldsymbol{h}}}
\newcommand{\bm}{{\boldsymbol{m}}}
\begin{document}
\title{Intermediate determinism in general probabilistic theories}

\author{Victoria J Wright}
\date{{\small {\it ICFO-Institut de Ciencies Fotoniques, The Barcelona Institute of Science and Technology, 08860 Castelldefels, Spain} \\\href{mailto:victoria.wright@icfo.eu}{victoria.wright@icfo.eu}}\\\vspace{0.5cm}\today}

\maketitle
\begin{abstract}
%In quantum theory, each property of a quantum system can be represented by a projection onto closed subspaces of the system's Hilbert space. Then, for systems with Hilbert space of dimension greater than two, the propensity of a system to take any given property is determined by the actual properties of the system, i.e., those definitely held by the system in its current (pure) state. This observation follows from Gleason's theorem, hence why it fails to hold in dimension two. We first consider a generalisation of this approach to allow for observables described by POVMs which successfully extends the result to dimension two. We then apply our approach to generalised probabilistic theories and determine in exactly which GPTs actual properties determine the propensities of all properties. We also characterise these GPTs in terms of the allowed restrictions on effect spaces from the maximal effect space.

Quantum theory is indeterministic, but not completely so. When a system is in a pure state there are properties it possesses with certainty, known as actual properties. The actual properties of a quantum system (in a pure state) fully determine
%its propensity to take any other property upon measurement
the probability of finding the system to have any other property. We call this feature intermediate determinism. In dimensions of at least three, the intermediate determinism of quantum theory is guaranteed by the structure of its lattice of properties. This observation follows from Gleason's theorem, which is why it fails to hold in dimension two. In this work we extend the idea of intermediate determinism from properties to measurements. Under this extension intermediate determinism follows from the structure of quantum effects for separable Hilbert spaces of any dimension, including dimension two. Then, we find necessary and sufficient conditions for a general probabilistic theory to obey intermediate determinism. We show that, although related, both the no-restriction hypothesis and a Gleason-type theorem are neither necessary nor sufficient for intermediate determinism. 
\end{abstract}

\section{Introduction}
It is well-known that in quantum theory properties of physical systems cannot be predicted with certainty. On the contrary, even with perfect knowledge of the state of the system one may only deduce its \emph{propensity to actualise}\footnote{We follow the usage of propensity in Ref.~\cite{gisin1991propensities} to mean the probability of a non-necessary (and non-predetermined) event. Accordingly, the propensity of a system to actualise a property would be the propensity of an outcome of an ideal measurement indicating that the system has that property immediately after the measurement.} a given property. Although quantum theory contains this intrinsic uncertainty it is not completely devoid of deterministic measurement events. Given an eigenstate of an observable $A$ with eigenvalue $a$, a system in this eigenstate will be deterministically found to have the property $A=a$ upon the measurement of $A$. We say that this property is an \emph{actual} property of the system in this state. 

Gisin~\cite{gisin1991propensities} defines an intermediate level between complete determinism and complete randomness wherein the \emph{actual} properties of the system fully determine its propensity to take other properties. In other words, the (pure) state of the system is entirely determined by its actual properties. Gisin proposes this intermediate determinism along with an axiom saying that every property should be an actual property of some state as a means to single out classical and quantum physics from other candidate theories. 

%The proposed reasoning starts from a previously established assumption that the properties of a physical system form a complete, orthomodular lattice. States assign probabilities to elements of the lattice... The aim is then to use some further axioms to impose two more properties on this lattice, namely, that is be atomic and satisfy the covering law. In this case a result of Piron identifies the property lattice as that of a quantum or classical system.

%Firstly, the principle of intermediate determinism restricts the set of states of a system to those that are uniquely determined by their actual properties. The maps describing these states are termed \emph{propensity functions}. Secondly, an axiom is added saying that all properties should be an actual property for some state. It is shown that this axiom implies the property lattice is atomic. The satisfac

%Gisin uses this intermediate determinism, together with an axiom stating that every property should be an actual property of some state, in an attempt to show that the structure of the lattice of properties of a system should correspond to that of a known physical system (from classical or quantum physics). This work, and that of the present manuscript, contributes to an effort that started shortly after the formulation of quantum theory that attempts to derive the theory (possibly along with classical physics) from some set of principles, be they logical, physical or operational. MOTIVATION. 

It remains an open question whether there exist theories satisfying Gisin's axioms in addition to those described by quantum and classical physics. However, even upon a negative resolution to this question the approach would not allow for a \emph{derivation} of quantum theory in its entirety; the proposed axioms rule out two-level quantum systems. This exclusion follows from the absence of Gleason's theorem for two dimensional complex Hilbert spaces. Although two-level quantum systems satisfy intermediate determinism, this fact does not follow from the system's property lattice. The lattice allows in theory for so many states that none of them have a unique set of actual properties.

Two-level systems do, however, admit a \emph{Gleason-type} theorem when one considers generalised observables given by positive-operator-valued measures (POVMs)~\cite{Busch2003,Caves2004,Wright2018}. Additionally, Gleason-type theorems have been proven for general probabilistic theories (GPTs)~\cite{GPTGTT}, a broad class of theories derived from operational assumptions. In this work we investigate intermediate determinism in theories that do admit Gleason-type results. 

Although originally defined in terms of properties, intermediate determinism can more generally be applied to measurements. The analogous statement is that every pure state of the system is uniquely identified by its actual \emph{effects}. Effects are elements of a physical theory assigned to all the possible outcomes of measuring all the possible observables to represent the relationships between these outcomes. For example, consider the measurement of a pair of observables $A$ and $B$ which have as possible outcomes $a$ and $b$, respectively, such that the probability of observing $a$ is equal to that of $b$ for any state of the system. This relationship is represented by assigning the same effect to both outcomes.

First, we consider this generalised notion of intermediate determinism in quantum theory. Under the generalisation, quantum theory continues to exhibit intermediate determinism, however, this fact now follows from the structure of quantum effects for all separable Hilbert spaces, including in dimension two. Therefore, analogous axioms for effects as those proposed by Gisin for properties would no longer rule out two-level quantum systems. 

%First, we modify the mathematical characterisation of a pure state used by Gisin so that it no longer depends on the lattice structure of properties and can be generalised to effects. In the case of quantum effects (which do not form a lattice) we recover the standard definition of pure states in quantum theory, i.e., rays of a Hilbert space.

Second, we consider intermediate determinism in GPTs, which becomes possible after the extension to effects. We say that the principle is satisfied when the extremal points of the state space have a unique set of actual effects. Unlike quantum theory, not all GPTs have this property. As our main result, Theorem~\ref{thm:main}, we provide necessary and sufficient conditions for a GPT to obey intermediate determinism. Although related, satisfaction of the no-restriction hypothesis~\cite{ChiribellaPhysRevA.81.062348,janottanorestriction} and the existence of a Gleason-type theorem~\cite{GPTGTT} are both found to be neither necessary nor sufficient for intermediate determinism. 

The aim of this work is, firstly, to better understand the potential features of GPTs that are not present in quantum theory. Secondly, we wish to introduce a way in which intermediate determinism can be used as an axiom for quantum theory without ruling out two-level quantum systems and can be combined with the operational reasoning of GPTs.

In Sec.~\ref{sec:gisin} we summarise the relevant parts of the Ref.~\cite{gisin1991propensities} which inspired the present work. Sec.~\ref{sec:quantum} treats the case of quantum properties again but removes the dependency on the lattice structure. This allows us to generalise the principle of intermediate determinacy to quantum effects (which do not form a lattice) and recover the standard description of pure states. In Sec.~\ref{sec:gpts}, we introduce the relevant parts of the GPT framework. In Sec.~\ref{sec:gptprop} we define the natural analogue of intermediate determinism in GPTs and present our main result, Theorem~\ref{thm:main}, in which we identify the exact class of GPTs that obey this principle. We proceed by giving some examples of GPT systems that do and do not obey intermediate determinism in Sec.~\ref{sec:eg}. In this section we also give a corollary to our main result establishing exactly which GPTs with a Gleason-type theorem satisfy intermediate determinism. In Sec.~\ref{sec:gpmgpt} we find which GPTs obey the stronger requirement that the intermediate determinism follow from the effect space structure. Finally, Sec.~\ref{sec:prop} then briefly lays out how the concepts in this work allow intermediate determinism to be a possible axiom for deriving quantum theory in combination with the GPT framework without ruling out two-level quantum systems.

%The attempts to alternatively axiomatise quantum theory may be broadly divided into two categories. In the first instance there are those that attempt to reproduce traditional quantum theory, with the von-Neumann model of measurement. These approaches tend to involve a logic based treatment and benefit from the constraints of lattice theory, e.g.~REFs \cite{vNBirkhofflogic1936,ludwig1967attempt,ludwig1983foundations,ludwig1985foundations}, Mackey, Piron?, other refs in gisin. The second variety of attempts aim for quantum theory with generalised observables, discarding lattices in favour of convex sets REFs information theoretic? Galley? Masanes mueller? Hardy? The ability to combine the mathematical advantages borne by these two approaches would result in a powerful tool. This work exemplifies that the two approaches can be connected by taking a concept conceived in the lattice regime and showing it can be applied to the convex framework by finding a common ancestor of the two approaches in which the concept may still be defined. 

\section{Propensity}\label{sec:gisin}
In this section we will summarise the notion of intermediate determinism introduced by Gisin \cite{gisin1991propensities}. Gisin begins with the premise, established and motivated by Piron~\cite{piron1976foundations,piron1983new} and Aerts~\cite{aertsthesis,aerts1982description}, that the properties of a system must form a complete orthomodular lattice. A complete lattice is a partially ordered set $(L,\leq)$ such that any subset $K\subseteq L$ has a least upper bound 
% or \emph{join}, 
denoted $\bigvee_{k\in K} k$, and a greatest lower bound 
% or \emph{meet}, 
denoted $\bigwedge_{k\in K} k$. We denote the greatest lower bound of $L$ by $0$, representing a property the system will never be found to possess. The partial order is interpreted as $l\leq k$ if property $l$ implies property $k$, i.e.~if property $l$ is actual so is property $k$. A lattice $L$ is orthomodular if it has a complement operation, denoted by $\cdot^c$, such that (i) $(l^c)^c=l$; (ii) $l\leq l^c$ only if $l=0$; (iii)  $l\leq k$ only if $l^c\geq k^c$; and, (iv) if $l\leq k$ then there exists $j\in L$ such that $j\leq l^c$ and $j\vee l=k$.  The complement $l^c$ is interpreted as the property of not having the property $l$. If $l\leq k^c$ we say that $l$ and $k$ are \emph{orthogonal} which is interpreted as the properties being disjoint but jointly measurable or testable and thus, never both being actual properties of a system in some state.

A state specifies the propensity of the system to actualise any given property from the lattice $L$. Thus, for each state there is a \emph{generalised probability measure} on the lattice of properties as defined below.

\begin{definition}\label{def:gpmp}
A generalised probability measure on a lattice of properties $L$, is a map $v:L\rightarrow[0,1]$ such that 
\begin{enumerate}[(i)]
\item for any sequence $(l_j)_j$ of pairwise orthogonal elements of $L$, 
\begin{equation}
v\left(\bigvee_jl_j\right)=\sum_j v(l_j)\,,
\end{equation} and,
\item $v(1)=1$, where $1\in L$ denotes the greatest element of $L$, $1=\bigvee_{l\in L} l$.
\end{enumerate}
\end{definition}

Gisin adds a third condition to this list to define a \emph{measure} on $L$:
\begin{itemize}
\item[($\star$)] for any subset $K\subset L$ such that $v(k)=1$ for all $k\in K$, $v(\bigwedge_{k\in K} k)=1$.
\end{itemize} 
Gisin notes that this condition has been critised and since it will not be relevant for the results of the present paper we exclude it from Def.~\ref{def:gpmp}.

For any measure $v$, let $l_v$ be the least actual property of $v$. Explicitly, $l_v=\bigwedge\{l\in L|v(l)=1\}$. Note that $l_v$ is guaranteed to also be an actual property by condition ($\star$). It follows that $l\in L$ is an actual property of $v$ if and only if $l_v \leq l$. With this framework in place we can impose the intermediate level of determinism by requiring that a state $v$ should be defined by its actual properties and hence its least actual property $l_v$. Hence any state can be represented by a \emph{propensity function}.

\begin{definition}\label{def:prop1}
A propensity function on a property lattice $L$ is a measure $v$ on $L$ with a unique least actual property, $l_v$, i.e.~if $w$ is a measure, we have $l_w=l_v$ if and only if $w=v$.
\end{definition}

Gisin \cite{gisin1991propensities} uses the definition of a state as a propensity function along with an axiom requiring that every (non-zero) property be actual for some state to narrow down the possible lattices of properties. The aim is to find that the property lattices must also be \emph{atomic} (and have at least four atoms) and satisfy the \emph{covering law}. At this point a result of Piron~\cite{piron1976foundations,gisin1991propensities,gisin84} would show that such property lattices always belong to classical or quantum systems. The axiom is shown to imply the lattice must be atomic and furthermore, all identified examples satisfy the covering law. However, it remains to be shown whether the covering law holds in general\footnote{In an earlier work~\cite{gisin84} the result of Piron was successfully used to single out classical and quantum theories by employing similar but stronger axioms.}. 

Although it is a possibility that only classical and quantum theories satisfy Gisin's axiom, we already know that this approach cannot \emph{rederive} quantum theory in its entirety since it rules out two-level or qubit systems. The property lattice of a system in classical physics is given by the power set of its phase space, $P(\Gamma)$. The partial order is given by inclusion and the complement is defined in the standard way, $X^c=\Gamma/X$. A propensity function on this lattice assigns probability one to a point, $x\in\Gamma$, of phase space and all the subsets containing that point. This assignment of probabilities coincides with the pure state described by the point $x$. It is clear that Gisin's axioms are satisfied.

On the other hand, quantum property lattices are given by the orthogonal projections $\PH$ onto closed subspaces of a given (complex) separable Hilbert space, $\cH$. When $\cH$ has dimension at least three Gleason's theorem ensures the existence of propensity functions. Namely, each propensity function is given by the Born rule for some pure state, i.e.~the functions $v(\Pi)=\langle\psi,\Pi\psi\rangle$ for some unit vector $\psi\in\cH$. However, Gleason's theorem does not hold in dimension two, and it follows that there exist no propensity functions on the lattice $\mathcal{P}(\C^2)$.

%The axiom rules out the lattice $\mathcal{P}(\C^2)$ of properties of quantum systems with two dimensional Hilbert spaces. In other dimensions, Gleason's theorem ensures the existence of propensity functions, namely, as maps given by Born's rule for a pure state, i.e., $v(\Pi)=\langle\psi,\Pi\psi\rangle$ for some unit vector $\psi\in\cH$. In $\C^2$ there is no Gleason's theorem and there are no propensity functions. With no propensity functions it is impossible to satisfy the requirement that every property is an actual property for some state, i.e., propensity function.

Explicitly, given any projection $\Pi\in\mathcal{P}(\C^2)$ then there are infinitely many generalised probability measures $v$ such that $l_v=\Pi$. Firstly, if $\Pi$ is rank-one, two possibilities are, $v_1(\Pi')=\Tr(\Pi'\Pi)$ and $v_2(\Pi')=2^{(\delta_{\Pi,\Pi'}-1)}-\delta_{\I-\Pi,\Pi'}/2$ for any rank-one projection $\Pi'$. Alternatively, if $\Pi$ is rank-two we have $\Pi=\I$, the identity operator on $\C^2$. Then $v(\Pi')=\Tr(\Pi'\rho)$ for any rank-two density operator $\rho$, gives $l_v=\I$. And finally, if $\Pi=0$ then for any $v$ such that $v(\Pi)=v(\Pi')=1$ for rank-one projections $\Pi\neq\Pi'$, we have in $l_v=\Pi$\footnote{This final set of measures is ruled out by Condition (2) on a measure in Ref.~ \cite{gisin1991propensities}. This difference leads to the same conclusion that there are no propensity functions on $\mathcal{P}(\C^2)$.}.

With no propensity functions, it is impossible for the lattice $\mathcal{P}(\C^2)$ to satisfy the requirement that every property is an actual property for some propensity function/state. Therefore, this axiom in its current formulation rules out two-level quantum systems. 

In contrast to Gleason's original theorem, a Gleason-type theorem does hold in dimension two \cite{Busch2003,Caves2004}. This Gleason-type theorem concerns \emph{generalised probability measures} on the set of quantum effects, $\EH$. Note that the Def.~\ref{def:gpmp} of a generalised probability measure does not rely upon every feature of a complete orthomodular lattice. We may therefore define a generalised probability measure on a more general structure such that Def.~\ref{def:gpmp} is recovered for complete orthomodular lattices, whilst the existing definition on quantum effects is also recovered. The generalised probability measure will also coincide with those considered in Gleason-type theorems for general probabilistic theories \cite{GPTGTT, farid2019}. The general structure and definition of a generalised probability measure is described in detail in Appendix~\ref{app:gpm}. For simplicity in the main text we will only state the resulting maps on the structures we study. 

%here we just state that it is a (FIND briefest description)partial commutative monoid, with an element such that $u+_Ma$ is defined only if $a=0$...and give the resulting definition of measures and propensity functions on the examples we consider. 

\section{Quantum theory}\label{sec:quantum}
\subsection{Projections}\label{sec:proj}

In this section we reconsider the case of the property lattice of quantum theory, but without using the lattice structure of the projections $\PH$ so that we can generalise the definitions later to non-lattice structures.

The projections $\PH$ form a lattice with the ordering $\Pi\geq\Pi'$ if and only if $\Pi-\Pi'\in\PH$. The sets of maps given by Defs,~\ref{def:gpmp} and \ref{def:gpm} of a generalised probability measure\footnote{The definition of a measure from Ref.~\cite{gisin1991propensities} also gives the same set of maps as the extra condition ($\star$) is automatically satisfied on this lattice.} coincide on $\PH$ to give:
\begin{definition}\label{def:gpmproj}
A generalised probability measure $v$ on $\PH$ is a map $v:\PH\rightarrow[0,1]$ such that $v(\sum_j \Pi_j)=\sum_jv(\Pi_j)$ for all sequences of mutually orthogonal projections $(\Pi_j)_j\subset\PH$ and $v(\I_\cH)=1$.
\end{definition}

Propensity functions are intended to represent states that are uniquely identified by their actual properties. Since $\PH$ forms a complete lattice the set of actual properties of a state can be identified by its greatest lower bound. In the case of $\PH$ this greatest lower bound is also an actual property, referred to as the least actual property. However, since we are attempting to remove the dependence of our definitions on the lattice structure we will utilise the following less elegant but more direct concept of an \emph{actual set} in place of the least actual property. We define the actual set of a measure $v$ on $\PH$ is the subset $A_v=\left\{\Pi\in\PH|v(\Pi)=1\right\}$.

We may now give an alternative definition of a propensity function on $\PH$ which is equivalent to Def.~\ref{def:prop1}, and still captures the idea that these states are uniquely identified by their actual properties.

\begin{definition}\label{def:propproj}
A propensity function on $\PH$ is a generalised probability measure $v$ on $\PH$ such that for all measures $v'$ on $\PH$, $A_{v'}=A_v$ only if $v'=v$.
\end{definition}

Note that in a general lattice the greatest lower bound of the set of actual properties of a generalised probability measure is not necessarily an actual property itself. For this reason, the additional condition ($\star$) imposed on measures. However, since in Def.~\ref{def:propproj} the least actual property is replaced by the actual set, the condition ($\star$) is no longer necessary. For Hilbert spaces of dimension at least three the equivalence of the definitions of a propensity function can be seen by the following lemma.

\begin{lem}\label{lem:proj}
Let $\mathcal{H}$ be a separable Hilbert space with dimension at least three. Every propensity function $v:\PH\rightarrow[0,1]$ admits an expression 
\begin{equation}
v(\Pi)=\langle\psi,\Pi\psi\rangle\,,
\end{equation}
for all $\Pi\in\PH$ and some unit vector $\psi\in\mathcal{H}$.
\end{lem}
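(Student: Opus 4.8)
The plan is to reduce the statement to Gleason's theorem and then turn the defining uniqueness property of a propensity function into a rank condition on the associated density operator. Since $\dim\cH\ge 3$, Gleason's theorem applies: every generalised probability measure $v$ on $\PH$ in the sense of Def.~\ref{def:gpmproj} is of the form $v(\Pi)=\Tr(\rho\Pi)$ for some density operator $\rho$ on $\cH$, and conversely every density operator defines such a measure. Moreover the assignment $\rho\mapsto v$ is injective, since $\Tr(\rho\,\ket{\phi}\bra{\phi})=\langle\phi,\rho\phi\rangle$ for every unit vector $\phi$ determines $\rho$ by polarisation. It therefore suffices to show that the propensity condition in Def.~\ref{def:propproj} forces $\rho$ to be a rank-one projection $\ket{\psi}\bra{\psi}$, since then $v(\Pi)=\langle\psi,\Pi\psi\rangle$ as claimed.

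The key step is to describe the actual set $A_v=\{\Pi\in\PH\mid v(\Pi)=1\}$ in terms of $\rho$. Let $S$ denote the projection onto the support of $\rho$, i.e.\ the orthogonal complement of $\ker\rho$. I would show that $\Tr(\rho\Pi)=1$ holds if and only if $\Pi\ge S$: from $\Tr(\rho(\I-\Pi))=0$ together with $\rho\ge 0$ and $\I-\Pi\ge 0$ one obtains $(\I-\Pi)\rho^{1/2}=0$, so that the range of $\Pi$ contains the support of $\rho$, and the converse inclusion is immediate. Hence $A_v$ depends on $\rho$ only through the support projection $S$.

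Now suppose towards a contradiction that $\mathrm{ran}\,S$ has dimension at least two. Then $S$ is the common support of at least two distinct density operators: fixing an orthonormal basis $\{e_i\}$ of $\mathrm{ran}\,S$, any assignment of strictly positive weights summing to one yields a density operator with support exactly $S$, and distinct weight assignments give distinct operators. By the previous step all of these share the same actual set $A_v$, yet by injectivity of $\rho\mapsto v$ they yield distinct generalised probability measures. This contradicts Def.~\ref{def:propproj}, which requires $v$ to be the unique measure with actual set $A_v$. Therefore $\mathrm{ran}\,S$ is one-dimensional, and since $\Tr\rho=1$ we conclude $\rho=\ket{\psi}\bra{\psi}$ for a unit vector $\psi$, giving $v(\Pi)=\langle\psi,\Pi\psi\rangle$.

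I expect the only genuinely deep ingredient to be Gleason's theorem, which is precisely why the hypothesis $\dim\cH\ge 3$ is needed and cannot be weakened; everything after it is elementary operator theory. The one point demanding a little care is the support computation, where one must justify passing from the scalar identity $\Tr(\rho(\I-\Pi))=0$ to the operator identity $(\I-\Pi)\rho^{1/2}=0$ using positivity of the two factors, together with the bookkeeping ensuring that the comparison density operators have support exactly $S$ rather than a proper subspace.
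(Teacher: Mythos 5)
Your proof is correct and follows essentially the same route as the paper's: Gleason's theorem, the observation that the actual set $A_v$ depends on $\rho$ only through its support projection, and the non-uniqueness of density operators sharing a support of dimension at least two, forcing $\rho$ to be a rank-one projection. The only cosmetic difference is that you derive the support characterisation from $\Tr(\rho(\I-\Pi))=0$ via positivity of the two factors, whereas the paper uses the spectral decomposition $\rho=\sum_j\lambda_jP_j$ to conclude $\Tr(\Pi P_j)=1$ for every $j$; the two arguments are elementary and interchangeable.
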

\begin{proof}
By Gleason's theorem, $v(\Pi)=\Tr(\Pi\rho)$ for some density operator $\rho$. In this case denote $A_v$ by $A_\rho$. Now we find $\Pi\in A_\rho$ if and only if $\Pi=\Pi_{\rm{supp}(\rho)}+\Pi'$, for the projection $\Pi'$ on to some subspace of $\rm{ker}(\rho)$, where $\Pi_{\rm{supp}}(\rho)$ is the orthogonal projection on to the support of $\rho$, as follows. 

Let $\rho=\sum_j\lambda_jP_j$ be a spectral decomposition of $\rho$ where $P_j$ are rank-one projections on to a subspace spanned by a unit vector $\psi_j$ and $\Pi\in A_\rho$. Since $0\leq\lambda_j\leq1$, $\sum_j\lambda_j=1$ and $0\leq\Tr(\Pi P_j)\leq1$, we find $\langle\psi_j,\Pi\psi_j\rangle=\Tr(\Pi P_j)=1$, for all $j$. The vectors $\psi_j$ span the support of $\rho$, therefore, we have $\Pi\chi=\chi$, for all $\chi\in\rm{supp}(\rho)$. The converse is clear.

Notice that $A_\rho$ is only a function of the support of $\rho$, hence any two density operators with the same support have the same actual set. Given a density operator $\rho$ with $\rm{ran}(\rho)\geq2$ there are infinitely many distinct density operators with the same support and therefore the same actual set. On the other hand, given a rank-one projection $P$ it is clear that $P\in A_\rho$ if and only if $\rho=P$, making $A_\rho$ unique to $\rho$. 

Thus $A_\rho=A_{\rho'}$ implies $\rho=\rho'$ if and only if $\rho=P$ for some rank-one projection $P$, in which case $v(\Pi)=\Tr(\Pi P)=\langle\psi,\Pi\psi\rangle$, for all $\Pi\in\PH$, where $\psi$ is a unit vector such that $P\psi=\psi$.  
\end{proof}

In dimension two the absence of propensity functions, and hence equivalence of Defs.~\ref{def:prop1} and~\ref{def:propproj}, follows from similar reasoning to that given in Sec.~\ref{sec:gisin}. Due to the fact that the pure states of a qubit are not propensity functions, we say that the property lattice of the system does not guarantee intermediate determinism. By this statement we mean that for each pure state of the system there exist other generalised probability measures with the same actual set as the pure state. If there were to be a state of the system with such a generalised probability measure (although this is not predicted by quantum theory) then the system would violate intermediate determinism. In higher dimensions, for pure states no such additional generalised probability measures exist and, thus, the intermediate determinism is guaranteed.

\subsection{Effects}\label{sec:eff}

We now generalise the notion of intermediate determinism from quantum properties to quantum effects. Consider a quantum system with Hilbert space $\cH$. Mathematically, the set of quantum effects $\EH$ comprises the self-adjoint operators $E$ on $\cH$ satisfying $0_\cH\leq E\leq I_\cH$, where $0_\cH$ and $I_\cH$ are the zero and identity operators on $\cH$, respectively, and $A\leq B$ means $\langle \psi,A \psi\rangle\leq \langle \psi,B \psi\rangle$ for all $\psi\in\cH$. Under this order quantum effects form a partially ordered set but not a lattice\footnote{For more detail on this point see, e.g.~Refs.~\cite{lahti1995partial,gudderlattice,Guddercounterexamples}}. Thus, effects are generally not interpreted as properties (see Sec.~\ref{sec:prop}).

Observables of a quantum system are most generally represented by positive-operator valued measures (POVMs). A POVM assigns a quantum effect to any subset of the set of values $\Omega$ of the observable. Precisely, it is a map ${\bf E}:\Sigma\to\EH$, where $\Sigma$ is a $\sigma$-algebra of subsets of $\Omega$, that is additive ${\bf E}(X\cup Y)={\bf E}(X)+{\bf E}(Y)$ for disjoint subsets $X$ and $Y$ and satisfies ${\bf E}(\Omega)=\I_\cH$.

In this way every possible outcome of measuring any observable of the system has an associated quantum effect. Assuming that measurement outcomes with the same effect occur with the same probability in any state motivates that states should map effects to probabilities via generalised probability measures, defined as follows \cite{Busch2003}.

\begin{definition}\label{def:qmeas}
A generalised probability measure $v$ on $\EH$ is a map $v:\EH\rightarrow[0,1]$ such that $v(\sum_j E_j)=\sum_jv(E_j)$ for every sequence of effects $(E_j)_j\subset\EH$ satisfying $\sum_jE_j\in\EH$ and $v(\I_\cH)=1$.
\end{definition}

We now say that an \emph{actual effect} of the system in a given state $v$ is an effect $E$ such that $v(E)=1$. In other words, any of the measurement outcomes, $X\in\Sigma$, of measuring an observable, ${\bf E}$, that are associated with the effect $E$ (i.e.~${\bf E}(X)=E$) occur with certainty when the system is in state $v$. We call the set of actual effects, $A_v=\left\{E\in\EH|v(E)=1\right\}$, of a generalised probability measure on $\EH$ its actual set.
%In parallel with the case of $\PH$ in Sec.~\ref{sec:proj}, we make the following definitions

%\begin{definition}
%The actual set of a measure $v$ on $\EH$ is the subset $A_v=\left\{\E\in\EH|v(E)=1\right\}$.
%\end{definition}

Intermediate determinism for effects then means that the pure states of a system are uniquely determined by their actual set. Such a state is given by a propensity function on $\EH$.

\begin{definition}\label{def:propqeff}
A propensity function on $\EH$ is a generalised probability measure $v$ on $\EH$ such that for all measures $v'$ on $\EH$, $A_{v'}=A_v$ only if $v'=v$.
\end{definition}

Below we show that applying the property of intermediate determinism to the set of quantum effects $\EH$ for a given separable Hilbert space $\cH$ identifies the standard set of pure states in quantum theory, i.e.~the rays of $\cH$. Thus, intermediate determinism for effects recovers the result of Lemma~\ref{lem:proj} of intermediate determinism for properties but also applies in dimension two. This extension is due to the Gleason-type theorem of Busch~\cite{Busch2003} and Caves et al.~\cite{Caves2004} holding in dimension two.

\begin{lem}
Let $\mathcal{H}$ be a separable Hilbert space. Every propensity function $v:\EH\rightarrow[0,1]$ admits an expression 
\begin{equation}
v(E)=\langle\psi,E\psi\rangle\,,
\end{equation}
for all $E\in\EH$ and some unit vector $\psi\in\mathcal{H}$.
\end{lem}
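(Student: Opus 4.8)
The plan is to leverage the Gleason-type theorem of Busch and Caves et al., which is the effect-space analogue of Gleason's theorem and, crucially, holds in \emph{every} separable Hilbert space including dimension two. That theorem states that every generalised probability measure $v$ on $\EH$ (in the sense of Def.~\ref{def:qmeas}) admits the form $v(E)=\Tr(E\rho)$ for some density operator $\rho$. So the first step is simply to invoke it: fix a propensity function $v$ and write $v(E)=\Tr(E\rho)$ for a density operator $\rho$, and denote its actual set $A_v$ by $A_\rho$.

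The heart of the argument is then to characterise the actual set $A_\rho$ and show that the propensity condition forces $\rho$ to be a rank-one projection. I would first determine exactly which effects $E$ satisfy $\Tr(E\rho)=1$. Writing a spectral decomposition $\rho=\sum_j\lambda_j P_j$ with rank-one projections $P_j$ onto unit vectors $\psi_j$, the same convexity argument as in Lemma~\ref{lem:proj} shows that $\Tr(E\rho)=1$ forces $\langle\psi_j,E\psi_j\rangle=1$ for every $j$ in the support. Here the key extra fact, special to effects, is that $0_\cH\leq E\leq I_\cH$ together with $\langle\psi_j,E\psi_j\rangle=1$ forces $E\psi_j=\psi_j$; indeed $\langle\psi_j,(I_\cH-E)\psi_j\rangle=0$ with $I_\cH-E\geq 0$ gives $(I_\cH-E)\psi_j=0$. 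Hence, exactly as for projections, $A_\rho$ depends only on $\mathrm{supp}(\rho)$: an effect $E$ lies in $A_\rho$ iff $E$ acts as the identity on $\mathrm{supp}(\rho)$.

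Since $A_\rho$ is a function of the support alone, any two density operators with the same support share the same actual set. Whenever $\mathrm{ran}(\rho)\geq 2$ there are infinitely many distinct density operators supported on that subspace, so $v$ would fail the uniqueness required of a propensity function. Therefore a propensity function must have $\rho$ of rank one, i.e.~$\rho=P$ a rank-one projection. Conversely, for a rank-one $P$ the rank-one effect $P$ itself lies in $A_P$ but in no other $A_{\rho'}$, so $A_P$ determines $P$ uniquely and $P$ genuinely is a propensity function. Writing $\psi$ for a unit vector with $P\psi=\psi$ then gives $v(E)=\Tr(EP)=\langle\psi,E\psi\rangle$ for all $E\in\EH$, as claimed.

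I expect the only genuinely new obstacle compared with the projection case to be the step $\langle\psi_j,E\psi_j\rangle=1\Rightarrow E\psi_j=\psi_j$: for projections this was immediate from idempotency, whereas for general effects one must use positivity of $I_\cH-E$ (e.g.~via the Cauchy--Schwarz inequality for the positive form $\langle\cdot,(I_\cH-E)\cdot\rangle$) to conclude that the vector, not merely its expectation, is fixed. Everything else is a faithful transcription of the projection argument, with the decisive improvement that the underlying Gleason-type theorem no longer requires $\dim\cH\geq 3$, which is precisely what lets the conclusion extend to dimension two.
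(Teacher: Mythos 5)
Your proposal is correct and follows essentially the same route as the paper's own proof: invoke the Busch/Caves Gleason-type theorem to write $v(E)=\Tr(E\rho)$, use the spectral decomposition and convexity to show $A_\rho$ depends only on $\mathrm{supp}(\rho)$, and conclude that uniqueness of the actual set forces $\rho$ to be a rank-one projection. If anything, you are slightly more careful than the paper, which simply asserts ``as $\psi_j$ is a unit vector we find $E\psi_j=\psi_j$'' where you correctly supply the positivity argument $\langle\psi_j,(I_\cH-E)\psi_j\rangle=0$ with $I_\cH-E\geq 0$ implying $(I_\cH-E)\psi_j=0$.
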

\begin{proof}
By Busch's theorem, $v(E)=\Tr(E\rho)$ for some density operator $\rho$. Now we find $E\in A_\rho$ if and only if $E=\Pi_{\rm{supp}(\rho)}+E'$, for some effect $E'$ such that $\rm{supp}(E')\subseteq\rm{ker}(\rho)$, as follows. 

Let $E\in A_\rho$. Since $\rho$ is compact, we may write its spectral decomposition as $\rho=\sum_j\lambda_jP_j$ where $P_j$ are rank-one projections on to a subspace spanned by a unit vector $\psi_j$. Since $0\leq\lambda_j\leq1$, $\sum_j\lambda_j=1$ and $0\leq\Tr(EP_j)\leq1$, we find $\langle\psi_j,E\psi_j\rangle=\Tr(EP_j)=1$, for all $j$. As $\psi_j$ is a unit vector for all $j$ we find $E\psi_j=\psi_j$. The vectors $\psi_j$ span the support of $\rho$, therefore, we have $E\chi=\chi$, for all $\chi\in\rm{supp}(\rho)$. The converse is clear.

As with the projection case, notice that $A_\rho$ is only a function of the support of $\rho$, hence any two density operators with the same support have the same actual set. Given a density operator $\rho$ with $\rm{ran}(\rho)\geq2$ there are infinitely many distinct density operators with the same support and therefore the same actual set. On the other hand, given a rank-one projection $P$ it is clear that $P\in A_\rho$ if and only if $\rho=P$, making $A_\rho$ unique to $\rho$.

Thus $A_\rho=A_{\rho'}$ implies $\rho=\rho'$ if and only if $\rho=P$ for some rank-one projection $P$, in which case $v(\Pi)=\Tr(\Pi P)=\langle\psi,\Pi\psi\rangle$, for all $E\in\EH$, where $\psi$ is a unit vector such that $P\psi=\psi$. 
\end{proof}

%The physical interpretation of quantum effects is less clear. Interpreting them as possible outcomes of measurements of quantum observables (where observables are represented by POVMs) involves a presumption of non-contextuality. However, what is clear is that quantum effects do not form a lattice, therefore if one wished to interpret them as properties much of the basic understanding of the meaning of property would have to be sacrificed. 

\section{General probabilistic theories framework}\label{sec:gpts}
General probabilistic theories provide a family of operationally motivated physical theories with which to compare the quantum and classical theories that we believe describe nature. Such theories have been studied since the 1960s, with important early works including those of Mackey~\cite{mackeybook}, Ludwig~\cite{ludwig1967attempt,ludwighilbertspace} and Kraus~\cite{kraus1983states}. We will briefly summarise the GPT framework, in the formulation of Ref.~\cite{GPTGTT}, without the operational motivation. For more details on the modern formulation and motivation see Refs.~\cite{janotta2014generalized,BarrettGPT,Masanes2011,barnum2011information,Hardy2001a}.

A GPT describes a set of GPT systems (as quantum theory describes quantum systems). A GPT system has a \emph{state space} $\mathcal{S}$ is given by a convex, compact set of vectors of the form
\begin{equation}
\boldsymbol{\omega}=\begin{pmatrix}x_{1}\\
\vdots\\
x_{d}\\
1
\end{pmatrix}\in\R^{d+1}\,.\label{eq:probstates}
\end{equation}

As in quantum theory, each possible outcome of a measurement in a GPT is associated to an \emph{effect}. The set of all effects in a GPT system is known as its \emph{effect space} which will generally be denoted by $\mathcal{E}$.
The effect space $\mathcal{E}$ also corresponds to a convex subset of
$\R^{d+1}$. Given a state space $\cS$ every effect $\ef\in\E$ must satisfy $0\leq\ef\cdot\bo\leq1$ for all $\bo\in\cS$, since this number defines the probability of observing outcome $\ef$ after performing a suitable measurement on a system in state $\bo\in\cS$. The effect space necessarily contains the zero and unit vectors, 
\begin{equation}
\boldsymbol{0}=\begin{pmatrix}0\\
\vdots\\
0\\
0
\end{pmatrix}\qquad\text{ and }\qquad\boldsymbol{u}=\begin{pmatrix}0\\
\vdots\\
0\\
1
\end{pmatrix},\label{eq: zero and unit}
\end{equation}
as well as the vector $(\boldsymbol{u}-\boldsymbol{e})$ for every
$\boldsymbol{e}\in\mathcal{E}$ %\cite{janottanorestriction}
, which arises automatically as a valid effect. The effect
space also spans the full $(d+1)$ dimensions of the vector space.

\emph{Observables} (or \emph{meters}~\cite{Filippov2019}) are given by maps ${\bf E}:\Omega_{\bf E}\to\E$ from a countable outcome set $\Omega_{\bf E}=\{1,2,\ldots\}\subseteq\mathbb{N}$ such that $\sum_{x\in\Omega_{\bf E}}{\bf E}(x)=\bu$. An observable is often identified with its (ordered) image, a tuple $\left\llbracket \ef_{1},\ef_{2},\ldots\right\rrbracket $
of effects that sum to the unit effect $\boldsymbol{u}$,
such that each effect in the tuple represents a different possible
outcome when measuring the observable. In addition to the effect space, a GPT should specify the set of observables. Any valid set of observables contains all the couples $\left\llbracket \ef,\boldsymbol{u}-\ef\right\rrbracket $, along with the observables that are thereby \emph{simulable}~\cite{Heinosaarisimulable,Filippov2019} by means of taking classical mixtures of measurement procedures and post-processings of outcomes. The results in this paper are valid for all choices of sets of observables.

The \emph{no-restriction hypothesis} states that given a state space $\cS$ the effect space should comprise all possible effects, i.e.~$\E$ should be equal to the set
\begin{equation}\label{eq:ES}
E\left(\mathcal{S}\right) =\left\{ \boldsymbol{e}\in\R^{d+1}|0\leq\boldsymbol{e}\cdot\boldsymbol{\omega}\leq1,\text{ for all }\boldsymbol{\omega}\in\mathcal{S}\right\}\,.
\end{equation}
%Under the no-restriction hypothesis (and for all almost noisy unrestricted GPTs \cite{GPTGTT}), the state space is given by the following map on the effect space
Analogously, given an effect space $\E$ the \emph{no-state-restriction hypothesis} says that the state space should contain all mathematically reasonable states, i.e.~should be given by the set
\begin{equation}\label{eq:we}
W(\E)=\left\{\bo\in\R^{d+1}\middle|\bo\cdot\ef\leq1\text{ for all }\ef\in\E\text{ and }\bo\cdot\bu=1\right\}\,.
%\cap\left\{\bo\in\R^{d+1}\middle|\bo\cdot\bu=1\right\}
\end{equation}
We will not assume either no-restriction hypothesis, however, we will find the maps in Eqs. \eqref{eq:ES} and \eqref{eq:we} very useful. For example, we will require the following result from Ref.~\cite{GPTGTT}.
\begin{lem}\label{lem:wes}
For any GPT with state space $\mathcal{S}$, we have $W\left(E\left(\mathcal{S}\right)\right)=\mathcal{S}$.
\end{lem}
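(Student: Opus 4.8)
The plan is to prove the two inclusions $\cS \subseteq W(\ES)$ and $W(\ES) \subseteq \cS$ separately; the first follows immediately by unwinding the definitions, while the second is the substantive part and rests on a separating hyperplane argument together with a normalisation trick using the unit effect $\bu$.

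For the inclusion $\cS \subseteq W(\ES)$, I would take any $\bo \in \cS$ and simply check the two defining conditions of $W(\ES)$ in Eq.~\eqref{eq:we}. By the definition of $\ES$ in Eq.~\eqref{eq:ES}, every $\ef \in \ES$ satisfies $\ef \cdot \bo \le 1$; and since $\bo$ has last coordinate equal to $1$ we have $\bo \cdot \bu = 1$. These are exactly the conditions placing $\bo$ in $W(\ES)$.

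For the reverse inclusion I would argue by contraposition: assuming $\bo_0 \notin \cS$, I aim to exhibit an effect $\ef \in \ES$ with $\ef \cdot \bo_0 > 1$, which shows $\bo_0 \notin W(\ES)$. If the last coordinate of $\bo_0$ is not $1$ then $\bo_0 \cdot \bu \neq 1$ and $\bo_0 \notin W(\ES)$ trivially, so I may assume $\bo_0 \cdot \bu = 1$. Since $\cS$ is convex and compact and $\bo_0 \notin \cS$, the separating hyperplane theorem yields a vector $\boldsymbol{a} \in \R^{d+1}$ with
\[
\boldsymbol{a} \cdot \bo_0 > M := \max_{\bo \in \cS} \boldsymbol{a} \cdot \bo ,
\]
the maximum being finite and attained by compactness; set also $m := \min_{\bo \in \cS} \boldsymbol{a}\cdot\bo$. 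The delicate point is that $\boldsymbol{a}$ itself need not lie in $\ES$, because effects must obey the two-sided bound $0 \le \ef \cdot \bo \le 1$ on all of $\cS$ rather than a one-sided separation. I would repair this by exploiting that $\bu$ is constant on states ($\bu \cdot \bo = 1$ for every $\bo \in \cS$), taking $\ef = \alpha \boldsymbol{a} + (1 - \alpha M)\,\bu$ for a scale $\alpha > 0$ to be chosen. Then $\ef \cdot \bo = \alpha(\boldsymbol{a}\cdot\bo) + 1 - \alpha M$ sweeps the interval $[\,1 - \alpha(M - m),\, 1\,]$ as $\bo$ ranges over $\cS$, so the choice $\alpha = 1/(M-m)$ places $\ef \cdot \bo$ in $[0,1]$ and hence $\ef \in \ES$, while $\ef \cdot \bo_0 = 1 + \alpha(\boldsymbol{a}\cdot\bo_0 - M) > 1$ since $\boldsymbol{a}\cdot\bo_0 > M$ and $\alpha > 0$. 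The degenerate case $M = m$, where $\boldsymbol{a}$ is constant on $\cS$, I would handle by instead taking $\ef = \alpha\boldsymbol{a} - \alpha M \bu$ with $\alpha$ large enough that $\alpha(\boldsymbol{a}\cdot\bo_0 - M) > 1$.

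I expect the only genuine obstacle to be precisely this normalisation: converting the separating functional into a bona fide effect while keeping both the upper and lower effect bounds. Compactness of $\cS$ is what guarantees $M$ and $m$ are finite and attained, and the affine shift by $\bu$ is what makes the argument work in the affine (last-coordinate-$1$) setting rather than the usual linear bipolar theorem. The statement is in effect an affine analogue of the bipolar theorem: because $\cS$ is already closed and convex, applying $E(\cdot)$ and then $W(\cdot)$ recovers it exactly.
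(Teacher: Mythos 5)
Your proof is correct. Note that the paper does not actually prove Lemma~\ref{lem:wes} --- it imports the statement from Ref.~\cite{GPTGTT} --- so there is no in-paper argument to compare against, but yours is the standard one for this bipolar-type statement: the inclusion $\cS\subseteq W(E(\cS))$ by unwinding Eqs.~\eqref{eq:ES} and \eqref{eq:we}, and for the converse a strict separating hyperplane (valid since $\cS$ is compact, convex and $\bo_0\notin\cS$) followed by an affine renormalisation exploiting $\bu\cdot\bo=1$ on $\cS$. The key computation checks out: with $\bo_0\cdot\bu=1$, the vector $\ef=\alpha\boldsymbol{a}+(1-\alpha M)\bu$, $\alpha=1/(M-m)$, satisfies $0\le\ef\cdot\bo\le1$ for all $\bo\in\cS$, hence $\ef\in\ES$, while $\ef\cdot\bo_0=1+\alpha(\boldsymbol{a}\cdot\bo_0-M)>1$, so $\bo_0\notin W(E(\cS))$; and your separate handling of the degenerate case $M=m$ via $\ef=\alpha\boldsymbol{a}-\alpha M\bu$ (which gives $\ef\cdot\bo=0$ on $\cS$ and $\ef\cdot\bo_0>1$ for $\alpha$ large) is also sound. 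Two pedantic remarks only: existence of $m$ and $M$ presupposes $\cS\neq\emptyset$, which holds for any GPT state space; and you need only that $\ef\cdot\bo$ is \emph{contained} in the interval $[1-\alpha(M-m),1]$, not that it sweeps it.
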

%\begin{lemma}\label{lem:ewe}
%Given a GPT effect space $\E$, for any $\ef\in E(\WE)$ there exists $p\in(0,1]$ such that $p\ef\in\overline{\E}$.
%\end{lemma}
%Note that it follows from Lemma \ref{lem:ewe} that for any $\ef\in E(\WE)$ there exists $p\in(0,1]$ such that $p(\bu-\ef)\in\overline{\E}$ and hence $\bu-p(\bu-\ef)=p\ef+(1-p)\bu\in\overline{\E}$.
%We also divide GPT effect spaces into two types. Type 1 effect spaces include all effect spaces explicitly identified in the literature, to the authors knowledge, with a single exemption. The almost-NU bit effect space REF was specifically constructed to not be of type 1. Type 1 effect spaces satisfy $\E^+=E(W(\E))^+$. The condition for a GPT to be actual is simpler in the case of Type 1 effect spaces...

We now give some simple examples of GPT systems: the classical-bit GPT, the NU bit and the aNU bit from Ref.~\cite{GPTGTT}, depicted in Fig.~\ref{fig:anu}. All three systems have state space, $\cS_B$ given by the line segment from $(-1,1)^T$ to $(1,1)^T$. The classical-bit has effect space, $\E_B=E(\cS_B)$, given by the square with vertices $\bO$, $\bu$ and $\ef_\pm=(\pm1/2,1/2)^T$. In the NU bit effect space the extremal points $\ef_\pm$ are replaced by scaled down versions $p\ef_\pm$ and their complements $\bu-p\ef_\pm$ for some $0<p<1$. The aNU bit has effect space, $\E_{aB}$, given by the intersection of two discs of radius $1/\sqrt{2}$ centred at $(\pm1/2,1/2)^T$. These examples demonstrate how the dual statement to Lemma~\ref{lem:wes} does not hold in general, since 
\begin{equation}
E(W(\E_{NB}))=E(W(\E_{aB}))=\E_B\neq\E_{NB}\neq\E_{aB}\,.
 \end{equation}

The NU bit is an example of a \emph{noisy unrestricted} (NU) GPT system, whereby the positive cone of the effect space is equal to the positive cone\footnote{For more details on the significance of cones in GPTs see Ref.~\cite{selby2021accessible}} of the unrestricted effect space, i.e.~$\E^+=E(\cS)^+$. Equivalently, for every $\ef\in\ES$ there exists $0<p\leq 1$ such that $p\ef\in\E$. The aNU bit does not satisfy this property but instead is an almost noisy unrestricted GPT system whereby the \emph{closure} of the positive cone of the effect space is equal to the positive cone of the unrestricted effect space, i.e.~$\overline{\E^+}=E(\cS)^+$. The class of aNU GPTs (including NU and unrestricted GPTs) is exactly the class of GPTs satisfying the no-state-restriction hypothesis~\cite{GPTGTT}.

\begin{figure}
\centering\includegraphics[width=0.75\textwidth]{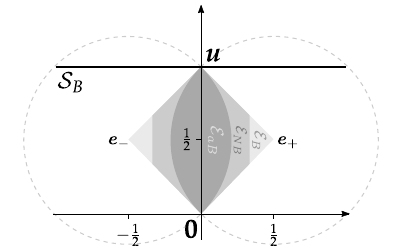}
\caption{\label{fig:anu} The bit state space $\cS_B$ and effect spaces $\E_B$, $\E_{NB}$ and $\E_{aB}$ of the classical bit, NU bit and aNU bit, respectively.}
\end{figure}

In this work will also require a few fundamental concepts from convex analysis. For a more detailed introduction see, e.g.~Refs.~\cite{Rockafellar1970,brondsted2012introduction}.
\begin{definition}
A \emph{closed half-space} of $\R^d$ is a set of vectors $\bx\in\R^d$ satisfying $\bh\cdot\bx\leq m$ for some $\bO\neq\bh\in\R^d$ and $m\in\R$.
\end{definition}
A \emph{supporting hyperplane} of a convex set, $S$, is the boundary of a closed half-space containing the set which also intersects the closure, $\overline{S}$, of the set.
\begin{definition}
A supporting hyperplane $H$ of a convex set $S$ in $\R^d$ is a set of points $\bx\in\R^d$ satisfying $\bh\cdot\bx=m$ for some $\bO\neq\bh\in\R^d$ and $m\in\R$ such that $\bh\cdot\boldsymbol{s}\leq m$ for all $\boldsymbol{s}\in S$ and $H\cap \overline{S}\neq\emptyset$.
\end{definition}
This intersection of the set with a supporting hyperplane constitutes an \emph{exposed face}.
An exposed face could be a point, a line segment or any other convex set. We define the dimension of a convex set via its \emph{affine hull}.
\begin{definition}
The affine hull of a subset $S\subseteq\R^d$ is the set of points $\bx=\lambda_1\boldsymbol{s}_1+\ldots+\lambda_n\boldsymbol{s}_n$ for real coefficients $\lambda_j$ such that $\lambda_1+\ldots+\lambda_n=1$ and $\boldsymbol{s}_1,\ldots,\boldsymbol{s}_n\in S$.
\end{definition} 
Now, we define the dimension of a convex set as the dimension of its affine hull. When an exposed face has dimension zero, i.e.~is a point, we call this point an \emph{exposed point}. Every exposed point $\bx$ is extremal (i.e.~$\bx$ cannot be written as a convex combination $\sum_jp_j\bx_j $ of other points from the set where $0<p_j<1$ for all $j$) but not every extremal point is exposed. The exposed points are, however, a dense subset of the extremal points.
 
Finally, the \emph{relative interior}, $\relint(S)$, of a convex set $S$ is the interior of the set $S$ when $S$ is viewed as a subset of its affine hull.
\begin{definition}
The relative interior, $\relint(S)$, of a convex set $S$ is the set of $\bx\in S$ such that for all $\by\in S$ there exists $\lambda>1$ such that $\lambda\bx+(1-\lambda)\by\in S$.
\end{definition}

\section{GPTs satisfying intermediate determinism}\label{sec:gptprop}

In Secs.~\ref{sec:proj} and \ref{sec:eff} we investigated how the intermediate determinism of quantum theory followed from the structure of properties (in dimensions greater than two) and effects. GPTs, on the other hand, may not even satisfy the principle of intermediate determinism, nevermind the stronger requirement that intermediate determinism must follow from the structure of the effect space. For example, in the aNU bit system in Sec.~\ref{sec:gpts}, the actual set of every state is simply the unit effect, thus none of the states have a unique actual set. In this section we show our main result, Theorem~\ref{thm:main} which states necessary and sufficient conditions for a GPT to satisfy intermediate determinism. In Sec.~\ref{sec:gpmgpt} we identify the subset of these GPTs in which intermediate determinism is a consequence of the effect space structure.

Thus, we will first identify exactly which GPTs satisfy intermediate determinism. A GPT satisfies intermediate determinism if each of its pure states (extremal points of its state spaces) can be uniquely identified by their sets of actual effects. 
Given a GPT state space $\cS$ with effect space $\E$, the actual set $A_\bo$ of a state $\bo\in\cS$ is given by
$A_\bo=\left\{\ef\in\E|\ef\cdot\bo=1\right\}$.

Explicitly, a GPT system satisfies intermediate determinism if for each pure state $\bo\in\cS$ we have that $A_{\bo'}=A_\bo$ implies $\bo'=\bo$ for all $\bo'\in\cS$. We will also sometimes need to consider the subset of the unrestricted effect space, $E(\cS)$, that gives probability one for a given state $\bo\in\cS$, in which case we will use the notation
\begin{equation}
A^{\ES}_\bo=\left\{\ef\in\ES|\ef\cdot\bo=1\right\}\,.
\end{equation}

To identify GPTs satisfying intermediate determinism, the exposed faces of GPT effect and state spaces will be important. In particular, we require the following two definitions.

\begin{definition}
An \emph{actual face} of a GPT effect space $\E$ is an exposed face $F\subset\E$ containing the unit effect $\bu$ that is maximal, in the sense that there does not exist an exposed face $F'$ of $\E$ such that $F\subsetneq F'$.
\end{definition}

\begin{definition}
A \emph{minimal exposed face} of a convex set is an exposed face $M$ such that $M\cap N$ either equals $M$ or the empty set, for all exposed faces $N$ of the convex set.
\end{definition}

We can now state the characterisation of GPTs satisfying intermediate determinism.
\begin{theorem}\label{thm:main}
A GPT system with state and effect spaces $\cS$ and $\E$ satisfies the principle of intermediate determinism if and only if 
\begin{enumerate}[(i)]
\item\label{itemF} for any pair of distinct actual faces $F$ and $F'$ of $E(\cS)$,  we have $F\cap\E\nsubseteq F'\cap\E$, and
\item\label{itemX} every extremal point of $\cS$ is exposed.
\end{enumerate}
\end{theorem}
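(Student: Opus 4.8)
The plan is to route both directions through a single duality between $\cS$ and $\ES=E(\cS)$ induced by the map $\bo\mapsto A^{\ES}_\bo$. First I would record that every exposed face of $\ES$ containing $\bu$ has the form $A^{\ES}_\bo$ for some $\bo\in\cS$: such a face is cut out by a supporting hyperplane $\{\ef:\ef\cdot\bm=c\}$ with $\bu\cdot\bm=c$ and $\ef\cdot\bm\le c$ on $\ES$; since $\bO,\bu\in\ES$ and $\bu-\ef\in\ES$ one rules out $c=0$ (it would force $\bm=\bO$), normalises $c=1$, and reads off $\bm\in W(\ES)=\cS$ via Lemma~\ref{lem:wes}. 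Dually, for an exposed face $F\ni\bu$ set $F^{\triangle}=\{\bo\in\cS:\ef\cdot\bo=1\ \forall\ef\in F\}$, an exposed face of $\cS$; writing $F=A^{\ES}_{\bo_F}$ gives $\bo_F\in F^{\triangle}$, from which $F^{\triangle\triangle}=F$ follows directly. A recurring observation is that for $\bo$ in the relative interior of an exposed face $G$ of $\cS$ one has $A^{\ES}_\bo=\bigcap_{\bo'\in G}A^{\ES}_{\bo'}$, because a functional attaining its maximum at a relative-interior point of $G$ is constant on $G$.

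With this I would prove the core lemma: the map $\bo\mapsto A^{\ES}_\bo$ sends exposed points of $\cS$ injectively into the actual faces of $\ES$, and every actual face is $A^{\ES}_{\boldsymbol p}$ for some \emph{extremal} point $\boldsymbol p$. For an exposed point $\bo$, the normal-cone description $N_\cS(\bo)=\overline{\mathrm{cone}}\,A^{\ES}_\bo+\R\bu$ of the inequality-defined set $\cS=W(\ES)$ shows that the maximiser set of any exposing functional contains $G:=(A^{\ES}_\bo)^{\triangle}$; uniqueness of the maximiser then forces $G=\{\bo\}$, whence $A^{\ES}_\bo$ is maximal, i.e.\ an actual face. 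Injectivity is then immediate, since $A^{\ES}_\bo=A^{\ES}_{\bo'}$ with $\bo$ exposed gives $\bo'\in(A^{\ES}_\bo)^{\triangle}=\{\bo\}$. For the extremal witness, apply Krein--Milman to the compact face $G=F^{\triangle}$ of an actual face $F$: an extreme point $\boldsymbol p$ of $G$ is extreme in $\cS$, satisfies $A^{\ES}_{\boldsymbol p}\supseteq F$, and maximality of $F$ forces $A^{\ES}_{\boldsymbol p}=F$.

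For the forward implication I would prove the contrapositive in two independent cases. If (ii) fails, take an extremal non-exposed $\bo$ and set $G=(A^{\ES}_\bo)^{\triangle}$; non-exposedness gives $G\neq\{\bo\}$, so $\dim G\ge1$ and $\relint G$ contains some $\bo'\neq\bo$ with $A^{\ES}_{\bo'}=A^{\ES}_\bo$ by the relative-interior observation, hence $A_{\bo'}=A_\bo$ and determinism fails at $\bo$. If (i) fails, take distinct actual faces $F,F'$ with $F\cap\E\subseteq F'\cap\E$, let $\bo$ be an extremal point exposing $F$ and $\boldsymbol q$ any state with $A^{\ES}_{\boldsymbol q}=F'$ (so $\boldsymbol q\neq\bo$), and put $\boldsymbol r=\tfrac12(\bo+\boldsymbol q)$; then $A_{\boldsymbol r}=A_\bo\cap A_{\boldsymbol q}=(F\cap\E)\cap(F'\cap\E)=F\cap\E=A_\bo$, again violating the principle at the extremal point $\bo$.

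For the converse, assume (i) and (ii) and let $\bo$ be extremal with $A_{\bo'}=A_\bo$ for some $\bo'\in\cS$. By (ii) and the core lemma $F:=A^{\ES}_\bo$ is an actual face. Writing $\bo'=\sum_{i=1}^{k}\lambda_i\bo_i$ as a finite convex combination of extremal points by Carath\'eodory (using finite dimension), each $F_i:=A^{\ES}_{\bo_i}$ is an actual face by (ii), and $A_{\bo'}=\bigcap_i(F_i\cap\E)$; hence $F\cap\E\subseteq F_i\cap\E$ for every $i$, so by the contrapositive of (i) each $F_i=F$, and by injectivity each $\bo_i=\bo$, giving $\bo'=\bo$. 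The main obstacle is the core lemma---specifically, showing that an exposed point of $\cS$ yields a \emph{maximal} exposed face of $\ES$---which rests on the duality $W(E(\cS))=\cS$ and the normal-cone description of $\cS$; once those are in place, both directions reduce to the convex-combination bookkeeping above.
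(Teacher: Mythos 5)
Your proof is correct, and while its high-level skeleton matches the paper's --- a duality between pure states and actual faces of $E(\cS)$, the intersection identity $A^{\E}_{\bo'}=\bigcap_j A^{\E}_{\bo_j}$ for convex decompositions, condition (i) to collapse the faces, and injectivity to conclude --- your supporting lemmata are organised genuinely differently. The paper routes everything through \emph{minimal exposed faces} of $\cS$ (Lemmata~\ref{lem:minmax}, \ref{lem:maxmin} and \ref{lem:exposed}), and in the ``only if'' direction first shows each pure state lies in a minimal exposed face by contradiction with equal mixtures. You instead work with the polar $F^{\triangle}$ satisfying $F^{\triangle\triangle}=F$, deriving maximality of $A^{\ES}_\bo$ for exposed $\bo$ from $(A^{\ES}_\bo)^{\triangle}=\{\bo\}$. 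This buys two simplifications: the Krein--Milman extreme point of the compact face $F^{\triangle}$ gives an \emph{unconditional} extremal witness for every actual face, so your $\lnot$(i) case needs no prior appeal to (ii) (the paper invokes Lemma~\ref{lem:maxmin} only after establishing (ii)); and your $\lnot$(ii) case replaces the paper's mixture argument with a clean relative-interior argument giving $A^{\ES}_{\bo'}=A^{\ES}_\bo$ for $\bo'\in\relint\bigl((A^{\ES}_\bo)^{\triangle}\bigr)$. Two compressed steps deserve a line each: the claim that non-exposedness of $\bo$ forces $(A^{\ES}_\bo)^{\triangle}\neq\{\bo\}$ needs the finite-dimensional fact that if the faces cut out by effects in $A^{\ES}_\bo$ intersect to $\{\bo\}$ then finitely many suffice (dimensions of proper subfaces strictly decrease) and their average is a single exposing effect; and the normal-cone identity $N_\cS(\bo)=\overline{\mathrm{cone}}\,A^{\ES}_\bo+\R\bu$ tacitly uses the shift-and-rescale trick of Lemma~\ref{lem:sfaces} (mixing an exposing functional with $\bu$ to land in $\ES$), so cite or reprove it. With those details supplied the argument is complete.
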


The proof of this result will centre around the bijection between the minimal exposed faces of $\cS$ and the actual faces of $\ES$. To establish this relationship we give the following series of lemmata, with proofs in Appendix~\ref{app:proofs}. Firstly, we find that for every exposed face of a state space $\cS$ there is a point of $\ES$ which only appears in the actual sets of points in that face.

\begin{restatable}{lem}{sfaces}\label{lem:sfaces}
For every exposed face $G$ of $\cS$ there exists an element $\ff$ of $\ES$ such that $G=\{\bo\in\cS|\ff\cdot\bo=1\}$.
\end{restatable}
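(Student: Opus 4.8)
The plan is to exploit the fact that, because every state $\bo\in\cS$ has final coordinate equal to $1$, the unit effect acts as the constant functional $\bu\cdot\bo=1$ on $\cS$. Consequently any affine function of the first $d$ coordinates can be realised as $\ef\cdot\bo$ for a suitable $\ef\in\R^{d+1}$, and such an $\ef$ lies in $\ES$ precisely when this affine function takes values in $[0,1]$ on all of $\cS$. The element $\ff$ I seek is therefore obtained by rescaling the linear functional that defines the supporting hyperplane of $G$ so that it becomes a legitimate effect attaining the value $1$ exactly on $G$.

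Concretely, I would first rewrite $G$ in terms of its defining supporting hyperplane: since $G$ is an exposed face of $\cS$ there is a nonzero $\bh\in\R^{d+1}$ and $m\in\R$ with $\bh\cdot\bo\le m$ for all $\bo\in\cS$ and $G=\{\bo\in\cS\mid\bh\cdot\bo=m\}$, where compactness of $\cS$ guarantees that $m=\max_{\bo\in\cS}\bh\cdot\bo$ is attained. Writing $\phi(\bo)=\bh\cdot\bo$ and $m'=\min_{\bo\in\cS}\phi(\bo)$ (again attained by compactness), the natural candidate is
\[
\ff=\frac{1}{m-m'}\bigl(\bh-m'\bu\bigr),
\]
so that $\ff\cdot\bo=\bigl(\phi(\bo)-m'\bigr)/(m-m')$. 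This quantity lies in $[0,1]$ for every $\bo\in\cS$, so that $\ff\in\ES$, and it equals $1$ exactly when $\phi(\bo)=m$, i.e.\ exactly on $G$; this yields $G=\{\bo\in\cS\mid\ff\cdot\bo=1\}$ as required.

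The one place where care is needed — and the step I would flag as the genuine obstacle to the construction above — is the degenerate case $m=m'$, in which $\phi$ is constant on $\cS$ and the rescaling is undefined. This occurs exactly when the supporting hyperplane contains all of $\cS$, forcing $G=\cS$ (the improper exposed face, which exists here because $\cS$ itself lies in the hyperplane $\bu\cdot\bo=1$). In this case I would simply take $\ff=\bu$, since $\bu\in\ES$ and $\bu\cdot\bo=1$ for every $\bo\in\cS$, giving $\{\bo\in\cS\mid\bu\cdot\bo=1\}=\cS=G$. Beyond this case distinction the argument is routine: the two verifications — that $\ff\in\ES$ and that its level set at $1$ is precisely $G$ — follow immediately from $m'\le\phi\le m$ on $\cS$ together with the equivalence $\phi(\bo)=m\iff\bo\in G$.
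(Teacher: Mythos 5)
Your proof is correct and follows essentially the same route as the paper's: both take the functional $\bh$ exposing $G$ and exploit the fact that $\bu$ acts as the constant $1$ on $\cS$ to rescale $\bh$ affinely so that it takes values in $[0,1]$, with the improper face $G=\cS$ handled separately by $\ff=\bu$. The only difference is presentational: you perform the shift-and-scale $\ff=(\bh-m'\bu)/(m-m')$ in a single step directly in $\R^{d+1}$, whereas the paper first passes to the de-embedded set $\cS_d\subset\R^d$, chooses the last coordinate of the lifted $\bh$ to normalise the maximum to $1$, and then mixes with $\bu$ to raise the minimum to $0$ when it is negative.
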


Conversely, we also find that every exposed face of an effect space $\E$ that contains the unit effect is the actual set of some point $\bo\in W(\E)$.

\begin{restatable}{lem}{ESfaces}\label{lem:ESfaces}
For every exposed face $F$ of $\E$ containing the unit effect there exists a vector $\bo\in W(\E)$ such that $F=\{\ef\in\E|\ef\cdot\bo=1\}$.
\end{restatable}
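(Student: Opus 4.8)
The plan is to exploit the supporting-hyperplane description of the exposed face $F$ and then renormalise the supporting functional into a legitimate state. Since $F$ is an exposed face of $\E$, by definition there is a supporting hyperplane $H=\{\ef\in\R^{d+1}\mid\bh\cdot\ef=m\}$ with $\bO\neq\bh\in\R^{d+1}$, such that $\bh\cdot\ef\leq m$ for all $\ef\in\E$ and $F=\E\cap H$. Because $\bu\in F\subseteq H$ we immediately obtain $\bh\cdot\bu=m$. The candidate state will be $\bo=\bh/m$, so the whole argument hinges first on showing $m>0$ and then on verifying that this $\bo$ has the desired actual set.

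The step I expect to be the crux is proving $m>0$, and this is where the special structure of an effect space (as opposed to an arbitrary convex set) must enter. First, since the zero effect $\bO$ lies in $\E$, the supporting inequality gives $0=\bh\cdot\bO\leq m$, so $m\geq0$. To exclude $m=0$, I would use that $\E$ is closed under complementation: for every $\ef\in\E$ the vector $\bu-\ef$ is also an effect, so $\bh\cdot(\bu-\ef)\leq m=0$, i.e.\ $\bh\cdot\ef\geq\bh\cdot\bu=0$. Combined with $\bh\cdot\ef\leq m=0$ this forces $\bh\cdot\ef=0$ for every $\ef\in\E$. Since the effect space spans the full $d+1$ dimensions, this would give $\bh=\bO$, contradicting $\bh\neq\bO$. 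Hence $m>0$.

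With $m>0$ in hand the remainder is routine verification. Setting $\bo=\bh/m$ gives $\bo\cdot\bu=m/m=1$ and $\bo\cdot\ef=(\bh\cdot\ef)/m\leq m/m=1$ for all $\ef\in\E$, so $\bo\in\WE$. Finally, an effect $\ef\in\E$ satisfies $\bo\cdot\ef=1$ precisely when $\bh\cdot\ef=m$, i.e.\ when $\ef\in H$; intersecting with $\E$ yields $\{\ef\in\E\mid\bo\cdot\ef=1\}=\E\cap H=F$, as required. (Should one instead read the exposed face as $\overline{\E}\cap H$, the same conclusion follows once one notes that effect spaces are closed, so that $\overline{\E}=\E$.)
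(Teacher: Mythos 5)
Your proof is correct and follows essentially the same route as the paper's: take the supporting hyperplane defining $F$, rule out the degenerate case $m=0$ using closure under complements ($\bu-\ef\in\E$) together with the fact that $\E$ spans $\R^{d+1}$, and normalise to $\bo=\bh/m\in\WE$. Your explicit step showing $m\geq0$ via $\bO\in\E$ is a detail the paper leaves implicit (it only argues that the hyperplane constant is nonzero before dividing), but the argument is the same in substance.
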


Note that setting $\E=\ES$ for any GPT state space $\cS$ in the lemma above gives that for every exposed face $F$ of $\ES$ containing the unit effect there exists a vector $\bo\in W(E(\cS))=\cS$ (by Lemma~\ref{lem:wes}) such that $F=\{\ef\in\E|\ef\cdot\bo=1\}$.

Now, we find that the actual set (in $\ES$) of every point in a given minimal exposed face of $\cS$ is an actual face of $\ES$.

\begin{restatable}{lem}{minmax}\label{lem:minmax}
Given a minimal exposed face $M$ of $\cS$ there exists an actual face $F$ of $\ES$ such that
\begin{enumerate}[(i)]
\item\label{item:F} $F=\{\ef\in\ES|\ef\cdot\bo=1\text{ for all }\bo\in M\}$, and
\item\label{item:A} $F=A^{\ES}_\bo$ for all $\bo\in M$.
\end{enumerate}
\end{restatable}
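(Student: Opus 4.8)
The plan is to take $F$ to be the set of effects in $\ES$ that evaluate to $1$ on all of $M$ — the natural dual face of $M$ — and then verify the three properties an actual face must have (exposed, containing $\bu$, maximal) while simultaneously establishing that this $F$ is the common actual set $A^{\ES}_\bo$ of every point $\bo\in M$.

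First I would set $F=\{\ef\in\ES\mid\ef\cdot\bv=1\text{ for all }\bv\in M\}$, which is exactly item~(\ref{item:F}) and which contains $\bu$ since $\bu\cdot\bv=1$ for every state $\bv\in\cS$. For each fixed $\bo\in M$ the inclusion $F\subseteq A^{\ES}_\bo$ is immediate, so the content is the reverse inclusion, and this is where minimality of $M$ enters. Given $\ef\in A^{\ES}_\bo$ I would form the set $G_\ef=\{\bv\in\cS\mid\ef\cdot\bv=1\}$, which is an exposed face of $\cS$ because $\ef\cdot\bv\le1$ holds throughout $\cS$ (as $\ef\in\ES$) and equality is attained at $\bo$, with $\ef\neq\bO$. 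Since $\bo\in M\cap G_\ef$, the defining property of a minimal exposed face forces $M\cap G_\ef=M$, so $\ef$ evaluates to $1$ on all of $M$ and hence lies in $F$. This yields $F=A^{\ES}_\bo$ for every $\bo\in M$, which is item~(\ref{item:A}); as a byproduct $F$ is exposed, being cut out of $\ES$ by the supporting hyperplane with normal $\bo$ at level $1$.

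It remains to establish maximality, which I would argue by contradiction using the duality lemmas. Suppose some exposed face $F'$ of $\ES$ satisfies $F\subsetneq F'$. Then $\bu\in F'$, so Lemma~\ref{lem:ESfaces} applied with $\E=\ES$, together with $W(\ES)=\cS$ from Lemma~\ref{lem:wes}, produces a vector $\bo'\in\cS$ with $F'=A^{\ES}_{\bo'}$. By Lemma~\ref{lem:sfaces} there is an $\ff\in\ES$ exposing $M$, namely $M=\{\bv\in\cS\mid\ff\cdot\bv=1\}$. Fixing any $\bo\in M$ we have $\ff\in A^{\ES}_\bo=F\subseteq F'=A^{\ES}_{\bo'}$, hence $\ff\cdot\bo'=1$, which places $\bo'$ in $M$. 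But then $F'=A^{\ES}_{\bo'}=F$ by item~(\ref{item:A}) already proved, contradicting $F\subsetneq F'$. Thus no such $F'$ exists, $F$ is maximal among exposed faces, and so $F$ is an actual face of $\ES$.

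I expect the main obstacle to be the second paragraph: pinning down that every point of a minimal exposed face carries the same actual set, since this is the one place the minimality hypothesis is genuinely used and it is what makes item~(\ref{item:A}) well posed. Once that is secured the maximality claim is a short dualisation through Lemmas~\ref{lem:sfaces}, \ref{lem:ESfaces} and~\ref{lem:wes}, and the side conditions ($\ef\neq\bO$, and non-emptiness of $M$ so that a base point $\bo$ exists, which holds because exposed faces are intersections of the compact set $\cS$ with supporting hyperplanes) are routine to verify.
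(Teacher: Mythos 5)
Your proof is correct, and it reaches the statement by a genuinely different route from the paper's, even though it draws on the same three supporting lemmas. The paper picks a relative-interior point $\bm\in\relint(M)$, defines $F=\{\ef\in\ES\mid\ef\cdot\bm=1\}$, and proves \emph{maximality first}: it takes an actual face $F'$ containing $F$, a point $\ff'\in\relint(F')$, forms $M'=\{\bo\in\cS\mid\ff'\cdot\bo=1\}$, uses minimality of $M$ to conclude $M'=M$ and hence $F'=F$, and only then deduces item~(\ref{item:A}) as a consequence of maximality. That route silently relies on relint-type facts (an affine functional bounded by $1$ that attains $1$ at a relative-interior point of a face is identically $1$ on that face) and on the existence of a maximal exposed face containing $F$, which does hold in $\R^{d+1}$ because a proper exposed face has strictly smaller dimension, but is asserted without comment. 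You invert the order: you prove item~(\ref{item:A}) first, via the clean observation that any $\ef\in A^{\ES}_\bo$ exposes a face $G_\ef$ of $\cS$ meeting $M$ at $\bo$, so the paper's definition of a minimal exposed face immediately yields $M\cap G_\ef=M$ and hence $\ef\in F$; maximality then follows by contradiction through Lemma~\ref{lem:ESfaces} with $\E=\ES$ together with $W(E(\cS))=\cS$ (Lemma~\ref{lem:wes}) and Lemma~\ref{lem:sfaces}, since any exposed face $F'\supsetneq F$ would equal $A^{\ES}_{\bo'}$ for some $\bo'\in\cS$ that the effect $\ff$ exposing $M$ forces into $M$, whence $F'=F$ by the part already proved. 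Your version dispenses with relative interiors and with the unstated existence of maximal faces, at the price of an argument by contradiction where the paper's is direct; both are sound. The side conditions are handled correctly: $\ef\cdot\bo=1$ forces $\ef\neq\bO$, so $G_\ef$ is a genuine exposed face (for $\ef=\bu$ one gets $G_\bu=\cS$, which still qualifies under the paper's definition via the hyperplane $\bu\cdot\bx=1$, and in any case $\bu\in F$ trivially), and $M\neq\emptyset$ by compactness of $\cS$, as you note.
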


Next, we show that every actual face of $\ES$ is an actual set for some minimal exposed face of $\cS$. Thus, there is a bijection between the actual faces of $\ES$ and minimal exposed faces of $\cS$.

\begin{restatable}{lem}{maxmin}\label{lem:maxmin}
Given an actual face $F$ of $\ES$ there exists a minimal exposed face $M$ of $\cS$ such that $\ef\cdot\bo=1$ for all $\ef\in F$ and $\bo\in\cS$ if and only if $\bo\in M$.
\end{restatable}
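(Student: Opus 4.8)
The plan is to show that the set
\begin{equation}
M=\{\bo\in\cS\mid\ef\cdot\bo=1\text{ for all }\ef\in F\}
\end{equation}
is a minimal exposed face of $\cS$. Since the biconditional in the statement forces $M$ to be exactly this set, the whole content of the lemma reduces to verifying that $M$ is nonempty, an exposed face, and minimal. Nonemptiness is immediate: applying Lemma~\ref{lem:ESfaces} with $\E=\ES$ (together with Lemma~\ref{lem:wes}) produces a state $\bo_0\in\cS$ with $F=A^{\ES}_{\bo_0}$, so $\ef\cdot\bo_0=1$ for every $\ef\in F$ and hence $\bo_0\in M$.

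First I would establish that $M$ is an exposed face by collapsing the (possibly infinite) family of constraints defining it to a single supporting hyperplane. Choosing any $\ef^*\in\relint(F)$ (which exists since $F$ is a nonempty convex set), I claim $M=\{\bo\in\cS\mid\ef^*\cdot\bo=1\}$. The inclusion $\subseteq$ is trivial; for $\supseteq$, given $\ef\in F$ I would write $\ef^*$ as a proper convex combination of $\ef$ and some $\ef'\in F$ using the relative-interior property, and then observe that, because every effect of $\ES$ pairs to at most $1$ with every state, a convex average of $\ef\cdot\bo\le1$ and $\ef'\cdot\bo\le1$ equalling $1$ forces $\ef\cdot\bo=1$. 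As $\ef^*\in\ES$ determines a genuine supporting hyperplane of $\cS$ (note $\bO\notin F$, since $\bO\cdot\bo_0=0\neq1$, so $\ef^*\neq\bO$), this exhibits $M$ as an exposed face. The identical argument with the roles of states and effects exchanged shows that for any nonempty exposed face $G$ of $\cS$ the ``dual'' set $\{\ef\in\ES\mid\ef\cdot\bo=1\text{ for all }\bo\in G\}$ is an exposed face of $\ES$ containing $\bu$.

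The crux is minimality, which I would prove by contradiction, and this is where the maximality built into the definition of an actual face is used. Suppose $M$ is not minimal, so some exposed face $N$ gives a proper nonempty exposed subface $M'=M\cap N\subsetneq M$; fix $\bo_1\in M\setminus M'$. By Lemma~\ref{lem:sfaces} there is $\ff\in\ES$ exposing $M'$, i.e.\ $M'=\{\bo\in\cS\mid\ff\cdot\bo=1\}$. On one hand $\ff$ lies in the dual face $\Phi(M')=\{\ef\in\ES\mid\ef\cdot\bo=1\text{ for all }\bo\in M'\}$, which by the previous paragraph is an exposed face of $\ES$ containing $\bu$, and which contains $F$ (every $\ef\in F$ pairs to $1$ with all of $M\supseteq M'$). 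On the other hand $\ff\notin F$: were $\ff\in F$, then $\bo_1\in M$ would force $\ff\cdot\bo_1=1$, placing $\bo_1$ in $M'$, a contradiction. Hence $F\subsetneq\Phi(M')$ is a strictly larger exposed face of $\ES$ containing $\bu$, contradicting the maximality of the actual face $F$; therefore $M$ is minimal. This minimality step is the main obstacle: extracting the separating effect $\ff$ from Lemma~\ref{lem:sfaces} and recognising that it enlarges $F$ inside the effect space is exactly what converts non-minimality of $M$ into a violation of the maximality of $F$. Combined with Lemma~\ref{lem:minmax}, this completes the bijection between the actual faces of $\ES$ and the minimal exposed faces of $\cS$.
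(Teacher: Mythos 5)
Your proof is correct and follows essentially the same route as the paper's: obtain a state whose actual set is $F$ via Lemma~\ref{lem:ESfaces}, expose $M$ by a single effect in $\relint(F)$ using the proper-convex-combination argument, and establish minimality by contradiction from Lemma~\ref{lem:sfaces} together with the maximality of the actual face $F$. The only cosmetic difference is that where you invoke the full dual face $\{\ef\in\ES\mid\ef\cdot\bo=1\text{ for all }\bo\in M'\}$ (requiring the symmetric relative-interior argument), the paper simply takes $F'=A^{\ES}_{\bo'}$ for a single $\bo'\in M'$, which is manifestly an exposed face of $\ES$ containing both $F$ and the separating effect.
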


Finally, we require one more lemma about state spaces in which every extremal point is exposed.

\begin{restatable}{lem}{exposed}\label{lem:exposed}
If all the extremal points of a convex set are exposed then every minimal exposed face of the set is a point.
\end{restatable}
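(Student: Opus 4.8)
The plan is to show that an arbitrary minimal exposed face $M$ of the convex set $S$ must reduce to a single exposed point of $S$. The strategy is to locate an extremal point inside $M$, argue that it is extremal (hence, by hypothesis, exposed) in the whole set $S$, and then turn the minimality of $M$ against the exposed point it contains.

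First I would note that $M$ is a nonempty compact convex set: it is the intersection of the compact convex set $S$ with a supporting hyperplane $H$, and it is nonempty because the definition of a supporting hyperplane requires $H\cap\overline{S}\neq\emptyset$. In the finite-dimensional, compact setting relevant to GPT state and effect spaces, the Minkowski/Krein--Milman theorem then guarantees that $M$ has at least one extremal point $\bx$. The key structural step is to promote $\bx$ to an extremal point of all of $S$, using that every exposed face is in fact a face. Concretely, if $\bh\cdot\bx=m$ is the supporting hyperplane cutting out $M$ (with $\bh\cdot\boldsymbol{s}\le m$ for all $\boldsymbol{s}\in S$) and $\bx=\lambda\by+(1-\lambda)\boldsymbol{z}$ with $\by,\boldsymbol{z}\in S$ and $0<\lambda<1$, then equality $\bh\cdot\bx=m$ forces $\bh\cdot\by=\bh\cdot\boldsymbol{z}=m$, so $\by,\boldsymbol{z}\in M$; since $\bx$ is extremal in $M$ this yields $\by=\boldsymbol{z}=\bx$, proving $\bx$ extremal in $S$. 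By the hypothesis that every extremal point of $S$ is exposed, the singleton $\{\bx\}$ is an exposed point, and therefore a zero-dimensional exposed face $N$ of $S$.

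Finally I would play $N=\{\bx\}$ against the minimality of $M$. Since $\bx\in M$ and $\bx\in N$, the intersection $M\cap N$ contains $\bx$ and is thus nonempty, so the defining property of a minimal exposed face forces $M\cap N=M$. But $M\cap N\subseteq N=\{\bx\}$, whence $M\subseteq\{\bx\}$ and so $M=\{\bx\}$ is a point, as claimed. I do not expect a serious obstacle here, since the argument is short; the only points requiring care are the two standard convexity facts invoked in the middle step, namely that a nonempty exposed face of a compact convex set contains an extremal point and that exposed faces are faces so that extremality is inherited by $S$. Both are routine in finite dimensions, so the crux is simply recognising that the exposedness hypothesis converts such an inherited extremal point into an exposed face that the minimality condition can act on.
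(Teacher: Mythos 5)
Your proof is correct and takes essentially the same route as the paper's: both locate an extremal point of the ambient set inside the minimal exposed face $M$ (you via Krein--Milman applied to $M$ together with the fact that exposed faces are faces, the paper via a Minkowski decomposition of a point of $M$ into extremal points, which land in $M$ for the same reason), note that this point is exposed by hypothesis, and then play the minimality of $M$ against the resulting singleton exposed face to collapse $M$ to that point. The only difference is cosmetic: you make explicit the compactness assumption and the ``exposed faces are faces'' step, both of which the paper's proof uses implicitly.
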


We can now prove Theorem~\ref{thm:main}.

\begin{proof}
First, consider a GPT in which the state space $\cS$ and effect space $\E$ of each system satisfy Conditions \eqref{itemF} and \eqref{itemX}. Let $\bo_X$ be an extremal and hence exposed point of $\cS$. Then, by Lemma~\ref{lem:minmax} the actual set of $\bo_X$ in $\ES$ is an actual face $F$, i.e.~$A^{\ES}_{\bo_X}=F$. The actual set of $\bo_X$ in $\E$ is then $A^\E_{\bo_X}=F\cap\E$. Consider a point $\bo\in\cS$ such that 
\begin{equation}\label{eq:eqact}
A^\E_{\bo}=A^\E_{\bo_X}\,.
\end{equation}
 Let $\bo=\sum^N_jp_j\bo_j$ for some $N\in\mathbb{N}$ be a decomposition of $\bo$ as a convex combination of extremal points of $\cS$. We then find that 
\begin{equation}\label{eq:bigcap}
A^\E_\bo=\bigcap^N_jA^\E_{\bo_j}\,,
\end{equation}
since $\ef\cdot\bo=1$ if and only if $\ef\cdot\bo_j=1$ for all $j$. Since the extremal states $\bo_j$ are also exposed we find that their actual sets in $\ES$ are given by actual faces, $F_j$, of $\ES$, respectively. Then Eqs.~\eqref{eq:eqact} and \eqref{eq:bigcap} give 
\begin{equation}
\begin{aligned}
F\cap\E=A^\E_{\bo_X}=A^\E_\bo&=\bigcap^N_jA^\E_{\bo_j}\\
&=\bigcap^N_jF_j\cap\E\,.
\end{aligned}
\end{equation}
Thus, we have $F\cap\E\subseteq F_j\cap\E$ for all $1\leq j\leq N$ and by Condition~(\ref{itemF}) this implies $F_j=F$. It then follows from Lemmata~\ref{lem:exposed} and \ref{lem:maxmin} that $\bo_j=\bo_X$ for all $1\leq j \leq N$. Thus, we find $\bo=\bo_X$ and the GPT satisfies intermediate determinism.

Second, consider a system of a GPT satisfying intermediate determinism with state space $\cS$ and effect space $\E$. Let $\bo_X$ be an extremal point of $\cS$. To begin, we will show that $\bo_X$ must be a member of a minimal exposed face of $\cS$. Suppose, to the contrary, that $\bo_X$ is not contained in some minimal exposed face $M$ of $\cS$. Explicitly, let $M'$ be the intersection of all the exposed faces of $\cS$ containing $\bo_X$ and suppose $M'$ strictly contains some minimal exposed face $M\not\ni\bo_X$. Let $F$ be the subset of $\ES$ giving probability one for all states in $M'$, i.e.~$F'=\{\ef\in\ES|\ef\cdot\bo=1\text{ for all }\bo\in M'\}$. We will show $F'=A^{\ES}_{\bo_X}$. Assume there exists $\ff\notin F'$ such that $\ff\cdot\bo_X=1$. Then, $M''=\{\bo\in\cS|\ff\cdot\bo=1\}$ satisfies $M''\cap M'\neq M'$ contradicting the definition of $M'$. 

Now, by Lemma~\ref{lem:minmax} the states of the minimal exposed face $M\subsetneq M'$ all share an actual set consisting of an actual face $F$ of $\ES$. Since $M\subsetneq M'$ we have $F'\subsetneq F$. Now consider an equal mixture $\bo'$ of $\bo_X$ with a vector $\bo_M$ of $M$. The actual set of this mixture is also $F'$ and hence $\bo_X$ would be an extremal state without a unique actual set.

We have shown that each extremal point $\bo_X$ of $\cS$ must be a member of a minimal exposed face, $M$. We will now show that $\bo_X$ must itself be a minimal exposed face, i.e.~we will show Condition~(\ref{itemX}) to hold. By Lemma~\ref{lem:minmax}, the actual set $A^{\ES}_{\bo_X}$ in $\ES$ is an actual face and we have $A^\E_{\bo_X}=A^{\ES}_{\bo_X}\cap\E$. Let $\bo$ be a generic point in the minimal exposed face $M$ containing $\bo_X$. Lemma~\ref{lem:minmax} tells us that $A^{\ES}_\bo=A^{\ES}_{\bo_X}$. Therefore, we find 
\begin{equation}
A^{\E}_\bo=A^{\ES}_\bo\cap\E=A^{\ES}_{\bo_X}\cap\E=A^{\E}_{\bo_X}\,,
\end{equation}
and thus, $\bo=\bo_X$ by the intermediate determinism of the GPT. Thus, Condition~(\ref{itemX}) must hold.

Finally, combining Lemma~\ref{lem:maxmin} with Condition~(\ref{itemX}) we find that given a pair of actual faces $F$ and $F'$ of $\ES$ there exists a pair of extremal points $\bo$ and $\bo'$ of $\cS$ such that $F$ and $F'$ are the actual sets of $\bo$ and $\bo'$ in $\ES$, respectively. Suppose $F\cap\E\subseteq F'\cap\E$ and let $\bo_m$ be an equal mixture of $\bo$ and $\bo'$. Then, $A^\E_\bo=F\cap\E=A^\E_{\bo_m}$ and the intermediate determinism of the GPT gives $\bo_m=\bo=\bo'$. Thus, we find $F=F'$ and Condition~(\ref{itemF}) holds.   

\end{proof}

\section{Examples of GPTs with and without intermediate determinism}\label{sec:eg}

The two conditions (\ref{itemF}) and (\ref{itemX}) of Theorem~\ref{thm:main} together are neither necessary nor sufficient for a GPT to satisfy the no-restriction hypothesis or admit a \emph{Gleason-type theorem}. We say a GPT admits a Gleason-type theorem when for each system every generalised probability measure on the effect space is given by the inner product with a point in the state space. This condition is equivalent to $\cS=W(\E)$~\cite{GPTGTT,farid2019}, where $\cS$ and $\E$ are the state and effect spaces of the system, respectively. In terms of restrictions on the unrestricted effect space $\ES$, a GPT system has a Gleason-type theorem if and only if the restriction is \emph{almost noisy}, explicitly when the effect space satisfies $\overline{\E^+}=E(\cS)^+$. Satisfying the no-restriction hypothesis is sufficient but not necessary for a GPT to admit a Gleason-type theorem.

Obeying the no-restriction hypothesis ($\E=\ES$) is also not sufficient for a GPT system to satisfy intermediate determinism since the state space $\cS$ could have extremal points that are not exposed. A classic example of such a convex set is a ``pill'' shape given by a square with two semicircles attached to two opposite sides, as in Fig.~\ref{fig:freak}. Explicitly, this set is the convex hull of two arcs given by 
\begin{equation}
\begin{pmatrix}
\cos(\theta)+1\\
\sin(\theta)\\
1
\end{pmatrix}\text{ for }-\frac{\pi}{2}\leq\theta\leq\frac{\pi}{2}\text{, and }
\begin{pmatrix}
\cos(\theta)-1\\
\sin(\theta)\\
1
\end{pmatrix}\text{ for }\frac{\pi}{2}\leq\theta\leq\frac{3\pi}{2}\,.
\end{equation}

Taking this convex subset of $\R^3$ to be the state space $\cS_p$ and letting $\E_p=E(\cS_p)\subset\R^3$ be the effect space results in a GPT system with a Gleason-type theorem, satisfying the no-restriction hypothesis that does not satisfy intermediate determinism.

\begin{figure}\centering
\includegraphics[scale=0.75]{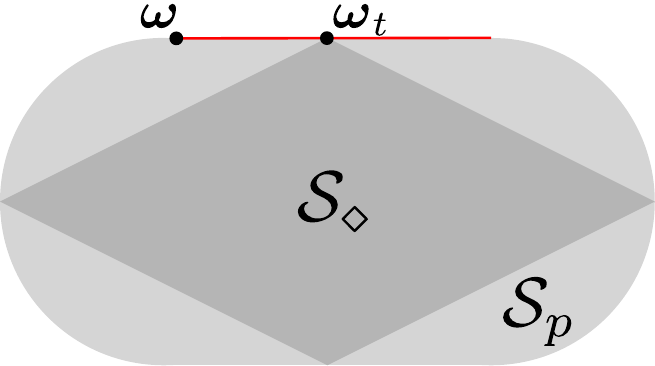}
\caption{\label{fig:freak} The state space $\cS_p$ in which not all extremal points are exposed, for example, the point marked $\omega$. A minimal exposed face is indicated in red. The darker diamond depicts the restricted state space, $\cS_\diamond$, with one of its four extremal points $\bo_t$ highlighted.}
\end{figure}

\begin{figure}
\includegraphics[width=\textwidth]{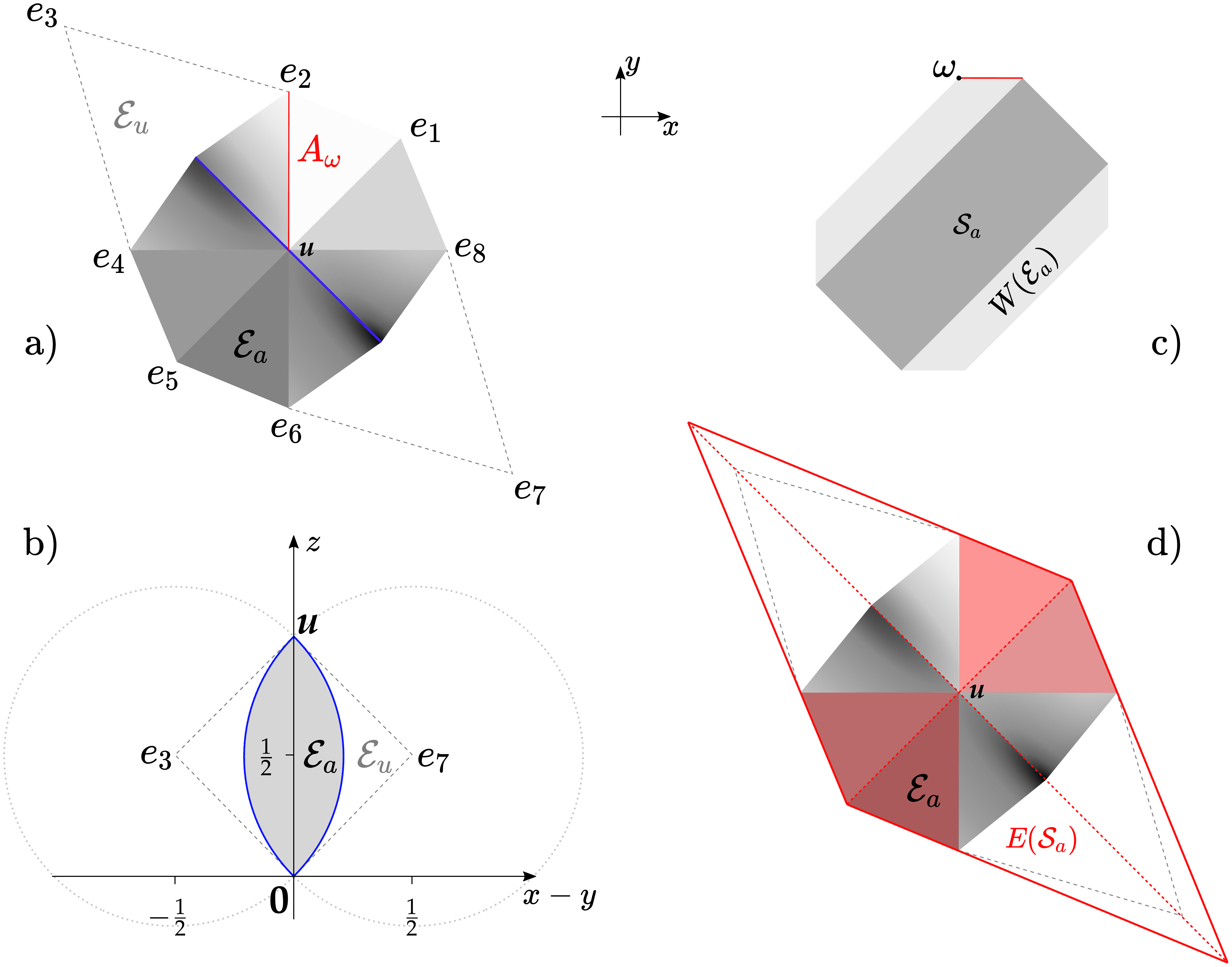}
\caption{\label{fig:oct}Depiction of the state space $\cS_a$, the effect space $\E_u$ and an almost noisy restriction $\E_a$ that satisfies intermediate determinism. Diagrams a), c) and d) show a ``bird's-eye'' view where the $z$-axis is normal to the page. Diagram a) shows the effect space $\E_a$ as a restriction of polyhedral effect space $\E_u$ given by the convex hull of the eight effects $e_1$ to $e_8$ and the zero and unit effects. In the restricted effect space $\E$ the extremal points $e_3$ and $e_7$ are discarded and replaced by continuous arcs of extremal points shown in blue. A cross-section of the effect space along the blue line in a) is shown in b). Diagram c) shows the state space $\cS_a$ compared to the unrestricted state space $W(\E_a)$. Finally, diagram  d) shows the effect space $\E_a$ compared to the unrestricted effect space $E(\cS_a)$, where the four shaded red triangles indicate the intersections of each of the four actual faces of $E(\cS_a)$ with $\E_a$.}
\end{figure}

We can also find examples of GPT systems that violate the no-restriction hypothesis and intermediate determinism but still have a Gleason-type theorem. The aNU bit system described in Sec.~\ref{sec:gpts} and Fig.~\ref{fig:anu} is such a system. 

In an almost NU GPT that is not a NU GPT, such as the aNU bit, the positive cone of the effect space is not closed, i.e.~$\overline{\E^+}\neq \E^+$. All such effect spaces have extremal effects arbitrarily close to the zero and unit effects. Unlike the aNU bit, effect spaces of this type can be part of GPTs satisfying intermediate determinism if the state spaces are restricted in the correct way. For example, consider an effect space $\E_u$ in $\R^3$ given by the convex hull of the zero and unit effects, $\bO$ and $\bu=(0,0,1)^T$ and eight extremal points arranged in a octagon\footnote{This system is a modified version of the octogon system from Ref.~\cite{janotta2011limits}.} in the plane $(x,y,1/2)^T$ for $x,y\in\R$, given by
\begin{equation}
e_j=\frac12\begin{pmatrix}
\cos \frac{\pi j}{4}\\\sin \frac{\pi j}{4}\\1
\end{pmatrix}\text{ for }j=3,7\text{ and }
e_j=\frac14\begin{pmatrix}
\cos \frac{\pi j}{4}\\\sin \frac{\pi j}{4}\\2
\end{pmatrix}\text{ otherwise.}
\end{equation}
Then take an almost-noisy restriction, $\E_a$, of this effect space by replacing $e_3$ and $e_7$ by continuous arcs of extremal effects depicted in blue on Fig.~\ref{fig:oct}~a) and b). These arc of extremal effects are chosen such that $\E_a^+\neq\E_u^+$ but $\overline{\E_a^+}=\E_u^+$. 

Now $\E_a$ together with the unrestricted state space $W(\E_a)=W(\E_u)$ shown in Fig.~\ref{fig:oct}~c) does not satisfy intermediate determinism. For example, the extremal point $\omega$ of $W(\E_a)$ has an actual set $A_\omega$ given by the convex hull of $e_2$ and $\bu$ depicted by the red line in Fig.~\ref{fig:oct}~a). However, all the states in the face of $W(\E_a)$ highlighted by the red line in Fig.~\ref{fig:oct}~c) (minus the other end point) also have the same actual set $A_\omega$. On the other hand, the state $\omega$ would have actual set given by the convex hull of $e_2$, $e_3$ and $\bu$ in the unrestricted effect space $\E_u$.

Out of the eight extremal points of $W(\E_a)$, four have the problem of no longer begin propensity functions. However, we may remove them without introducing any further extremal points (or reducing the dimension of the state space) by taking the convex hull of the four propensity functions, resulting in the restricted state space $\cS_a$ shown in Fig.~\ref{fig:oct}~c). 

The pair $\E_a$ and $\cS_a$ exhibit intermediate determinism since they satisfy requirements (\ref{itemF}) and (\ref{itemX}) of Theorem~\ref{thm:main}. In particular, in Fig.~\ref{fig:oct}~d) we see how the intersection of each of the four actual faces of $E(\cS_a)$ with $\E_a$ is a distinct actual face of $\E_a$.

We conclude this section by noting that if all the extremal points of a GPT's state spaces are exposed then the no-restriction hypothesis is sufficient to ensure that intermediate determinism holds, as is being a NU GPT. This statement is a corollary of Theorem~\ref{thm:main}.

\begin{corollary}
A noisy unrestricted GPT in which all the extremal points of the state spaces are exposed obeys intermediate determinism.
\end{corollary}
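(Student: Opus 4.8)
The plan is to invoke Theorem~\ref{thm:main} and verify its two conditions separately. Condition~(\ref{itemX}) holds immediately by hypothesis, since we assume that every extremal point of the state space is exposed. The entire content of the proof is therefore to show that being a noisy unrestricted (NU) GPT forces Condition~(\ref{itemF}) to hold; notably, the exposedness assumption is not needed for this part, so the argument shows more generally that Condition~(\ref{itemF}) holds for every NU GPT.

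So I would fix two distinct actual faces $F$ and $F'$ of $\ES$. Using Lemma~\ref{lem:ESfaces} with the effect space taken to be $\ES$ itself (noting $W(\ES)=\cS$ by Lemma~\ref{lem:wes}), I would write $F=\{\ef\in\ES\mid\ef\cdot\bo=1\}$ and $F'=\{\ef\in\ES\mid\ef\cdot\bo'=1\}$ for suitable $\bo,\bo'\in\cS$. Arguing by contradiction, suppose $F\cap\E\subseteq F'\cap\E$; the goal is then to derive $F=F'$. The key device is the complement map $\ef\mapsto\bu-\ef$, which is an involution carrying both $\E$ and $\ES$ to themselves and which turns the condition $\ef\cdot\bo=1$ into $(\bu-\ef)\cdot\bo=0$. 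Under this map the assumed inclusion becomes $\{\ef\in\E\mid\ef\cdot\bo=0\}\subseteq\{\ef\in\E\mid\ef\cdot\bo'=0\}$.

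The main obstacle, and the reason the result is not a one-liner, is that the NU property only lets one scale an arbitrary $\ef\in\ES$ into $\E$ after shrinking it by some factor $0<p\le1$, and such shrinking destroys the defining equality $\ef\cdot\bo=1$ of an actual set; one therefore cannot directly lift the effects of $F$ into $\E$. Passing to complements removes this obstruction precisely because the value $0$ is invariant under scaling: if $\ef\in\ES$ satisfies $\ef\cdot\bo=0$, then choosing $p$ with $p\ef\in\E$ still gives $(p\ef)\cdot\bo=0$, so $p\ef$ lies in $\{\ef\in\E\mid\ef\cdot\bo=0\}$ and hence, by the displayed inclusion, in $\{\ef\in\E\mid\ef\cdot\bo'=0\}$; thus $(p\ef)\cdot\bo'=0$ and so $\ef\cdot\bo'=0$. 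This yields $\{\ef\in\ES\mid\ef\cdot\bo=0\}\subseteq\{\ef\in\ES\mid\ef\cdot\bo'=0\}$, which under the complement map is exactly $F\subseteq F'$. Since $F$ is an actual face it is a maximal exposed face of $\ES$ containing $\bu$, while $F'$ is an exposed face of $\ES$, so $F\subseteq F'$ forces $F=F'$, contradicting the distinctness of $F$ and $F'$. Hence no such inclusion can hold, Condition~(\ref{itemF}) is established, and Theorem~\ref{thm:main} delivers intermediate determinism.
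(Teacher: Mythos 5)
Your proposal is correct and is essentially the paper's own argument: the paper likewise reduces everything to Condition~(\ref{itemF}) (with exposedness supplying Condition~(\ref{itemX})), and its witness effect $\ef_p=\bu-p(\bu-\ef)\in\E$, built by applying the NU scaling to the complement $\bu-\ef$ of some $\ef\in F\setminus F'$ and complementing back, is exactly your complement--scale--complement device, exploiting that scaling preserves the value $0$ but not the value $1$. The only cosmetic difference is that the paper applies the trick pointwise to exhibit $\ef_p\in(F\cap\E)\setminus(F'\cap\E)$ directly, whereas you run it setwise by contradiction, lift the inclusion to $\ES$ to get $F\subseteq F'$, and finish with the maximality of actual faces.
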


\begin{proof}
We need to show that in a NU GPT system with state space $\cS$ and effect space $\E$ satisfy Condition~(\ref{itemF}) of Theorem~\ref{thm:main}. Let $F$ and $F'$ be two distinct actual faces of $E(\cS)$. Then, given $\ef\in F/F'$ we have that $p(\bu-\ef)\in\E$ for some $0<p\leq 1$ and thus, $\ef_p=\bu-p(\bu-\ef)=p\ef-(1-p)\bu\in\E$. Let $\bo$ be a state for which $F'$ is the actual set, i.e., $\ff\cdot\bo=1$ for all $\ff\in F'$. The effect $\ef_p$ is clearly in $F$ however, cannot be in $F'$ since this would require $\ef\cdot\bo=1$ and therefore, $\ef\in F'$.
\end{proof}

It follows that most of the GPTs in the literature satisfy intermediate determinism, such as all the polytope systems from Ref.~\cite{janotta2011limits}. However, being a NU GPT is not necessary, for example the convexification~\cite{janottanorestriction} of the Spekkens' toy theory~\cite{spekkens2007evidence} is not a NU GPT but does obey intermediate determinism.

\section{Intermediate determinism from effect space structure}\label{sec:gpmgpt}

Now we have established exactly which GPTs satisfy the principle of intermediate determinism, we can also identify in which GPTs intermediate determinism is guaranteed by the structure of the effect space, as is the case with quantum theory. This property is necessary if one wishes to have an axiomatisation of a theory in which pure states are propensity functions. Recall the example of the qubit lattice of \emph{properties}. Here we said that although the system satisfies intermediate determinism, in the sense that each pure state has a unique actual set, this fact is not guaranteed by the property lattice. Explicitly, for every pure state there exist other generalised probability measures (not given by density operators) with the same actual set. Thus, there exist no propensity functions on this lattice. In particular, the pure states of a qubit are not propensity functions on its lattice of properties. 

On the other hand, we found that the pure states of a qubit were propensity functions on the qubit \emph{effect space}. Therefore, assuming the effect space structure there is no other theoretically possible state with the same set of actual properties as any given pure state. Thus, we say the effect space structure guarantees the intermediate determinism of the qubit. We now define generalised probability measures and propensity functions on a GPT effect space in order to identify in which GPTs the pure states are propensity functions on the effect spaces.

The following definition of a generalised probability measure on a GPT effect space coincides with Def.~\ref{def:qmeas} for quantum effects when the set of quantum effects is viewed as a GPT effect space. The definition is also a special case of Def.~\ref{def:gpm}. 
\begin{definition}\label{def:gpmgpt}
A generalised probability measure $v$ on a GPT effect space $\E$ is a map $v:\E\rightarrow[0,1]$ satisfying 
\begin{equation}
v(\ef+\ff+\ldots)=1\,
\end{equation}
for all sequences of effects $(\ef,\ff,\ldots)\subset\E$ such that $\left\llbracket\ef,\ff,\ldots\right\rrbracket$ is an observable.
\end{definition}
It was shown in Ref.~\cite{GPTGTT} that for any GPT effect space and any set of observables every generalised probability measure $v$ can be expressed as $v(\ef)=\ef\cdot\bo$ for some $\bo\in W(\E)$.

 We can similarly generalise the notion of an actual set to a generalised probability measure, $v$, on a GPT effect space, as the set of effects, $A_v=\left\{\ef\in\E|v(\ef)=1\right\}$, which occur with certainty when the system is in a state with generalised probability measure $v$.  

%\begin{definition}
%The actual set $A_v$ of a generalised probability measure $v:\E\rightarrow[0,1]$ is given by
%\begin{equation}
%A_v=\left\{\ef\in\E|v(\ef)=1\right\}\,.
%\end{equation}
%\end{definition}

Now, we define a propensity function as a generalised probability measure that is uniquely identified by its actual set. 

\begin{definition}
A propensity function on a GPT effect space $\E$ is a generalised probability measure $v$ on $\E$ such that $A_v=A_{v'}$ implies $v=v'$ for all generalised probability measures $v'$ on $\E$.
\end{definition}

Thus, if the extremal states in a GPT system are propensity functions on the effect space there cannot exist other generalised probability measures with the same actual set. In general, the set of generalised probability measures can strictly contain the state space. Therefore, all pure states being propensity functions is a stronger statement than the GPT satisfying intermediate determinism, whereby all pure states must have a unique actual set amongst all the states in the state space but not necessarily amongst the larger set of generalised probability measures. 

For example, consider the state space $\cS_\diamond$ depicted in Fig.~\ref{fig:freak} which is a restriction of the pill shaped state space, $\cS_p$, along with the unrestricted effect space $E(\cS_p)$ of $\cS_p$. The pair $\cS_\diamond$ and $E(\cS_p)$ satisfy intermediate determinism but the top extremal point, $\bo_t$, is not a propensity function. The section of the boundary marked in red, of which $\bo_t$ is the midpoint, is a minimal exposed face of $\cS_p$ and thus, by Lemma~\ref{lem:minmax}, all the points in this face have the same actual set in $\E(\cS_p)$.

The propensity functions on an effect space are given by the points of $W(\E)$ whose actual set is an actual face of $\E$. This fact can be seen from Lemma~\ref{lem:ESfaces} and the fact that the actual set of any state in $W(\E)$ is an exposed face of $\E$ containing $\bu$. It follows from Lemmata~\ref{lem:minmax} and \ref{lem:maxmin} that all the propensity functions will be exposed points of $W(\E)$ but not all exposed points are propensity functions. Therefore, in order for intermediate determinism to be guaranteed by the effect space structure we need that the extremal points of the state space $\cS$ are a subset exposed points of $W(\E)$ with actual sets given by actual faces of $\E$. More precisely:

\begin{obs}\label{obs:props}
All the extremal states of a GPT system are propensity functions if and only if they are a subset of the (exposed) points of $W(\E)$ whose actual sets are actual faces of the effect space of the system.
\end{obs}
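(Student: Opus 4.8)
The plan is to reduce Observation~\ref{obs:props} to the characterisation of propensity functions described in the paragraph just above it, and then to justify that characterisation. First I would pin down both sides precisely. By the result of Ref.~\cite{GPTGTT} every generalised probability measure on $\E$ has the form $v(\ef)=\ef\cdot\bo$ for some $\bo\in W(\E)$, and since $\E$ spans $\R^{d+1}$ distinct points give distinct measures; hence the propensity functions on $\E$ are in bijection with the points $\bo\in W(\E)$ that are the \emph{unique} point of $W(\E)$ realising their actual set $A_\bo=\{\ef\in\E\mid\ef\cdot\bo=1\}$. Because every extremal state lies in $\cS\subseteq W(\E)$ and ``being a propensity function'' is an intrinsic property of a point of $W(\E)$, the assertion ``all extremal states are propensity functions'' is literally ``the extremal states are contained in the set of propensity functions.'' So the entire content is to identify this set with the set $P$ of exposed points of $W(\E)$ whose actual sets are actual faces of $\E$.

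To prove ``propensity $\Rightarrow P$'' I would first note that for any $\bo\in W(\E)$ the set $A_\bo$ is an exposed face of $\E$ containing $\bu$, since $\ef\cdot\bo\le 1$ on $\E$ makes $\{\ef\mid\ef\cdot\bo=1\}$ a supporting hyperplane and $\bu\cdot\bo=1$. If $A_\bo$ were not maximal, i.e.\ not an actual face, there would be a strictly larger exposed face $F'\supsetneq A_\bo$ containing $\bu$; by Lemma~\ref{lem:ESfaces} some $\bo'\in W(\E)$ has $A_{\bo'}=F'$, and then the midpoint $\tfrac12(\bo+\bo')\ne\bo$ satisfies $A_{\frac12(\bo+\bo')}=A_\bo\cap A_{\bo'}=A_\bo$, contradicting uniqueness. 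Hence $A_\bo$ is an actual face $F$. Moreover every point of the face $P_F=\{\bo'\in W(\E)\mid\ef\cdot\bo'=1\text{ for all }\ef\in F\}$ of $W(\E)$ has actual set containing $F$, hence equal to $F$ by maximality, so uniqueness forces $P_F=\{\bo\}$; as $P_F$ is an exposed face this shows $\bo$ is an exposed point, i.e.\ $\bo\in P$.

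For the converse ``$P\Rightarrow$ propensity'' I would argue that if $\bo$ is exposed and $A_\bo=F$ is an actual face, then $F$ determines a \emph{minimal} exposed face of $W(\E)$ which must coincide with the singleton exposed face $\{\bo\}$. Every point of $P_F$ again has actual set exactly $F$, so it suffices to see $P_F$ is minimal, whence $P_F\cap\{\bo\}=\{\bo\}$ forces $P_F=\{\bo\}$ and hence uniqueness. This is the step where I would invoke the bijection between minimal exposed faces of the state space and actual faces of the effect space supplied by Lemmata~\ref{lem:minmax} and~\ref{lem:maxmin}.

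The main obstacle is exactly this last step. Lemmata~\ref{lem:minmax} and~\ref{lem:maxmin} are phrased for the \emph{unrestricted} effect space $E(\cS)$, whereas here $\E$ may be a proper restriction, and the examples of Sec.~\ref{sec:eg} (the aNU bit, and the four ``spoiled'' vertices of $W(\E_a)$ in the octagon) show that a restriction can break the clean correspondence: an exposed point of $W(\E)$ can fail to be a propensity function precisely because its \emph{restricted} actual set is no longer maximal. I would therefore need to verify that the maximality hypothesis --- that $A_\bo$ is an actual face of $\E$ itself, not merely of $E(W(\E))$ --- is exactly what rescues minimality of $P_F$: it should guarantee that the effects active at $\bo$ already generate a supporting functional exposing $\bo$, so that no other point can share the actual set $F$. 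Making this active-constraint (normal-cone) argument rigorous for a possibly non-polyhedral restricted $\E$, and checking it reproduces the behaviour observed in the worked examples, is the crux of the proof.
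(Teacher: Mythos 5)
Your necessity direction is sound, and it in fact makes rigorous what the paper only sketches: the paper has no formal proof of Observation~\ref{obs:props}, just the preceding paragraph invoking Lemma~\ref{lem:ESfaces} and the fact that each $A_\bo$ is an exposed face of $\E$ containing $\bu$; your midpoint construction $\tfrac12(\bo+\bo')$ and the face $P_F$ exposed by a relative-interior point $\ff\in\relint(F)$ are a correct and complete version of that half. The genuine gap is the sufficiency direction, which you rightly flag as ``the crux'' but leave unproven --- and, more seriously, the repair you propose cannot work. You hope that maximality of $A_\bo$ in the \emph{restricted} effect space $\E$ forces $P_F$ to be a minimal exposed face of $W(\E)$, i.e.\ that the effects active at $\bo$ already expose $\bo$. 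The aNU bit refutes exactly this: $W(\E_{aB})=\cS_B$, both endpoints $(\pm1,1)^T$ are exposed points of $W(\E_{aB})$, and the actual set of \emph{every} state is $\{\bu\}$, which is an actual face of $\E_{aB}$ (the lens has strictly curved boundary, so $\{\bu\}$ is the unique, hence maximal, exposed face containing $\bu$; one checks directly that the maximum of $\ef\cdot(\pm1,1)^T$ over the lens is $1$ and is attained only at $\bu$). So both endpoints lie in your set $P$, yet neither is a propensity function, and $P_F=W(\E_{aB})$ fails to be minimal even though $F=\{\bu\}$ is maximal in $\E$. Your diagnosis of how restriction breaks things (``the restricted actual set is no longer maximal'') is therefore incomplete: the restricted actual set can stay maximal while uniqueness still fails.

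The structural reason is the one isolated by Condition~(\ref{itemF}) of Theorem~\ref{thm:main}: the bijection of Lemmata~\ref{lem:minmax} and~\ref{lem:maxmin} lives at the level of the unrestricted space $E(W(\E))$ and does not descend to $\E$, because a single actual face of $\E$ can be the common trace of several distinct actual faces of $E(W(\E))$ --- in the aNU bit, $\{\bu\}=F_+\cap\E_{aB}=F_-\cap\E_{aB}$ for the two actual faces $F_\pm$ of $E(W(\E_{aB}))=\E_B$. To close the converse you would need an additional hypothesis of Condition~(\ref{itemF}) type with $\cS$ replaced by $W(\E)$ (distinct actual faces of $E(W(\E))$ having non-nested intersections with $\E$), under which your minimality argument for $P_F$ does go through; alternatively one must read the set in Observation~\ref{obs:props} as the points of $W(\E)$ that are the \emph{unique} realisers of their actual face, at which point sufficiency is tautological. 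Note that this difficulty is not an artefact of your route: the paper's own justification paragraph establishes only the necessity half cleanly (and the full equivalence in the unrestricted case $\E=E(W(\E))$, where singleton exposed faces are minimal and Lemma~\ref{lem:maxmin} applies directly), so your proposal, read charitably, has rediscovered a real weak point rather than merely failed to finish a routine step.
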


\section{Properties in GPTs}\label{sec:prop}

Propensity functions were originally introduced to represent states that were completely characterised by their actual \emph{properties}. In Secs.~\ref{sec:eff} and~\ref{sec:gpmgpt} we generalised propensity functions to act on quantum/GPT effect spaces. In this setting propensity functions represent states that are completely characterised by their actual \emph{effects}. Effects are generally not interpreted as representing properties of a quantum system. In order to do so much of the understanding of what is meant by a property would be lost since effects do not form a lattice.

The generalised notion of propensity functions for effects captures the meaning of intermediate determinacy but applies it to the likelihood of measurement events as opposed to the propensity of a system to take certain properties. The idea is made more operational and, thus, becomes compatible with the GPT framework.

The advantage of generalising to effects is that propensity functions now exist for two-level quantum systems. The disadvantage is that since we are no longer considering properties the possibility for deriving quantum and classical theories using the result of Piron (as summarised in Sec.~\ref{sec:gisin}) is lost. However, if we reintroduce the notion of a property into the effect paradigm this argument could become viable once more\footnote{Properties and probabilistic theories have been treated simultaneously before, for example in the test-space formalism \cite{testspaceprobs}.}. 

One candidate for the properties of a GPT system is its set of extremal effects. In this case of quantum theory this definition recovers the property lattice $\PH$~\cite{davies1976quantum}. The extremal effects are endowed with a natural partial order given by $\ef\leq \ff$ if $\ff-\ef\in\E^+$. This partial order is compatible with the interpretation that if $\ef\leq\ff$ and $\ef$ is actual then so is $\ff$. 

We may then use the ideas of the present manuscript to continue following the path laid out by Gisin to try single out quantum theory, but without ruling out two-level quantum systems. Namely, we can combine three requirements (i) properties form a complete orthomodular lattice, (ii) states are generalised probability measures on effect spaces and pure states are also propensity functions on the properties and (iii) every (non-zero) property is an actual property of some state. Note that requirements (i) and (iii) are follow those of Gisin and only (ii) has been modified.

We can then discover whether there exist complete orthomodular lattices of GPT properties satisfying intermediate determinism besides the quantum and classical cases, or subtheories thereof. We can also investigate how the characteristics of a GPT property lattice translate into features of the effect space, and vice versa. For example, in which effect spaces would the extremal effects form a lattice satisfying the covering law? And what does an effect space being an almost noisy restriction (satisfying $\E^+\neq\overline{\E^+}$) mean for the structure of extremal effects?

\section{Discussion}
In this work we have generalised the idea of intermediate determinism from properties to measurement events. Under this generalisation we have found that the propensity functions on quantum effect spaces are given by the Born rule given some pure quantum state. Furthermore, we have identified exactly which GPTs satisfy intermediate determinism in our main result, Theorem~\ref{thm:main} and also when the pure states of a GPT system are given by propensity functions in Observation~\ref{obs:props}. 

In future work the author wishes to explore introducing the idea of properties into the GPT framework, as proposed in Sec.~\ref{sec:prop}. Then the results of this manuscript could be used to attempt to single out quantum theory using the method proposed by Gisin~\cite{gisin1991propensities} without ruling out quantum systems in dimension two. As suggested in Sec.~\ref{sec:prop}, one candidate for properties that recovers the property lattice $\PH$ from the quantum effect space is the set of extremal effects. Alternatively, in this work the exposed faces of the effect spaces were shown to play an important role in intermediate determinism. These faces are an appealing candidate since the exposed faces of any convex set naturally form a complete lattice.

Introducing properties to GPTs will also open up questions such as, whether there exist complete orthomodular lattices of GPT properties satisfying intermediate determinism or the covering law other than those of quantum and classical theory. These structures may also shed light on the open question of Gisin, as to whether there exist atomic, complete, orthomodular lattices with at least four orthogonal atoms that do not satisfy the covering law.
%Along the way we explored some of the geometry of GPTs, in particular that of exposed faces. Note that, the exposed faces of a convex set form a complete lattice. In future work the author would like to investigate whether the lattice based approach of REFs and the GPT approach may be combined via this observation. The strengths of these two approaches complement each other, for example the convexity of GPTs allows for Gleason-type theorems to exist where they are missing in the lattice approach, whereas the lattice approach has the potential to narrow down broad range of theories of the GPT framework. Thus finding such a combined approach has the potential to uniquely single out quantum and classical theories.

\section*{Acknowledgements}
The author would like to thank Nicolas Gisin for pointing out the role of Gleason's theorem in Ref.~\cite{gisin1991propensities} and for helpful comments. This project has received funding from the European Union's Horizon 2020 research and innovation programme under the Marie Sk\l odowska-Curie grant agreement No.~754510. The author acknowledges support from the Government of Spain (FIS2020-TRANQI and Severo Ochoa CEX2019-000910-S), Fundaci\'o Cellex, Fundaci\'o Mir-Puig, Generalitat de Catalunya (CERCA, AGAUR SGR 1381).

\bibliographystyle{unsrturl}
\bibliography{propensitybib2}
\appendix
\section{Generalised probability measure structure}\label{app:gpm}
In this appendix we give a definition of a generalised probability measure on a very general structure. Our definition coincides with the definitions of a generalised probability measures on a orthomodular lattices, quantum effect spaces and GPT effect spaces given in Defs.~\ref{def:gpmp}, \ref{def:gpmproj}, \ref{def:qmeas} and \ref{def:gpmgpt}. 

Let $(M,\oplus)$ be partial commutative monoid, with an element $u$ such that $u\oplus m$ is defined if and only if $m=0$ where zero is the identity element of the operation $\oplus$. Explicitly, $M$ is a set with a partial binary operation $\oplus$ such that:
\begin{enumerate}[(i)]
\item if $m\oplus n$ is defined then $n\oplus m$ is defined and $m\oplus n=n\oplus m$;
\item if $(m\oplus n)\oplus o$ is defined then $m\oplus (n\oplus o)$ is defined and we denote $m\oplus n \oplus o\vcentcolon=m\oplus (n\oplus o)=(m\oplus n)\oplus o$;
\item there exists an identity element $0\in M$ such that $0+m=m$ for all $m\in M$;
\item there exists an element $u\in M$ such that $u\oplus m$ is defined if and only if $m=0$.
\end{enumerate} 
Additionally, we say that $\bigoplus_{j=1}^\infty m_j$ is defined if $\bigoplus_{j=1}^N m_j$ for all $N\in\mathbb{N}$. The existence of $m\oplus n$ generalises the idea $m$ and $n$ being jointly measurable as effects or orthogonal as properties.

\begin{definition}\label{def:gpm}
A generalised probability measure $v$ on $(M,\oplus)$ is a map $v:M\rightarrow[0,1]$ such that (i) for any sequence $(m_j)$ such that $\bigoplus_{j=1}^\infty m_j$ is defined, we have $v(\bigoplus_{j=1}^\infty m_j)=\sum_{j=1}^\infty v(m_j)$, and (ii) $v(u)=1$.
\end{definition}

If $M$ forms a complete orthomodular lattice, we take the partial operation $\oplus$ to be the least upper bound operation restricted to elements $a,b$ such that $a<b^c$. For quantum and GPT effect spaces we take $\oplus$ to be the standard addition operation, restricted to pairs of effects that sum to an effect. Under these mappings the definitions of a generalised probability measure coincide with Defs.~\ref{def:gpmp}, \ref{def:gpmproj}, \ref{def:qmeas} and \ref{def:gpmgpt}. The only non-immediate aspect to verifying this fact is showing that the infinite sum $\bigoplus_{j=1}^\infty m_j$ can be defined as we would wish in the lattices/effect spaces. For example, in a quantum effect space we will demonstrate that the infinite sum $\bigoplus_{j=1}^\infty E_j$ is defined exactly when $\left(\sum_{j=1}^N E_j\right)$ weakly converges to an effect, and therefore taking $\bigoplus_{j=1}^\infty E_j=\text{w-}\lim_{N\to\infty}\sum_{j=1}^N E_j$ is well-defined, where $\text{w-}\lim$ is the weak limit.

Let $(E_j)$ be a sequence of effects in $\EH$ such that $S_N=\sum_{j=1}^NE_j\leq E\in\EH$ for all $N\in\mathbb{N}$. Then we have $(S_N)$ is increasing and bounded above therefore (for example, by Theorem 2.7 in Ref.~\cite{busch2016quantum}), the set $\{S_N|N\in\mathbb{N}\}$ has a least upper bound $S$ to which $(S_N)$ weakly converges. Since $E$ is an upper bound of $\{S_N|N\in\mathbb{N}\}$, we find $\sum_{j=1}^\infty E_j=S\leq E$ is an effect. For the converse, clearly if  $(E_j)$ is a sequence of effects in $\EH$ such that $\sum_{j=1}^N E_j$ weakly converges to an effect then $\sum_{j=1}^N E_j\in\EH$ for all $N\in\mathbb{N}$.

\section{Proofs of Lemmata~\ref{lem:sfaces}, \ref{lem:ESfaces}, \ref{lem:minmax}, \ref{lem:maxmin} and \ref{lem:exposed}}\label{app:proofs}
\sfaces*
\begin{proof}
Firstly, if $G=\cS$ then we may take $\ff=\bu\in\ES$. 

Otherwise, given a state space $\cS$ embedded in $\R^{d+1}$, let $\cS_d$ denote the $d$-dimensional state space before the embedding. Explicitly, 
\begin{equation}\cS=\{(\bx_1,\ldots,\bx_d,1)^T|(\bx_1,\ldots,\bx_d)^T\in\cS_d\}.
\end{equation}
Let $H_d$ be the supporting hyperplane of $\cS_d\subset\R^d$ intersecting $\cS_d$ at the exposed face $G_d$. Then we may describe $H_d$ as the set of vectors $\by_d\in\R^d$ such that
\begin{equation}\label{eq:yd}
\bh_d\cdot\by_d=m,
\end{equation}
for some vector $\bh_d=(h_1,h_2,\dots,h_d)^T\in\R^d$ and real number $m\in\R$, chosen such that $\bh_d\cdot\bo_d\leq m$ for all $\bo_d\in \cS_d$.
Now, letting $\bh=\bh_d\oplus(1-m)\in\R^{d+1}$ and $\by=\by_d\oplus1\in\R^{d+1}$, Eq. \eqref{eq:yd} may be equivalently written as 
\begin{equation}\label{eq:y}
\bh\cdot\by=1\,,
\end{equation}
and $\bh\cdot\bo\leq1$ for all $\bo\in \cS$.

Finally, let $l=\min_{\bo\in\cS}\bh\cdot\bo$. If $l\geq0$, setting $\ff=\bh$ gives $\ff\in\ES$ and $\{\bo\in\cS|\ff\cdot\bo=1\}=G$. If $l<0$ we may mix $\bh$ with the unit effect to bring its minimum value on $\cS$ to zero whilst preserving the Eq. \eqref{eq:y}. Explicitly, set 
\begin{equation}
\ff=\frac{l}{l-1}\bu+\frac{1}{1-l}\bh\,.
\end{equation}
Now, similarly the vector $\ff\in\R^{d+1}$ satisfies $\ff\in\ES$ and $\{\bo\in\cS|\ff\cdot\bo=1\}=G$.
\end{proof}

\ESfaces*
\begin{proof}
Let $H=\{\ef\in\E|\ef\cdot\bh=x\}$ be a supporting hyperplane of $\E$ intersecting $\E$ at $F$, where $\bh\in\R^{d+1}$ and $x\in\R$.

Firstly, we will show that $x\neq0$ since $\E$ must span $\R^{d+1}$ and contain $\bu-\ef$ for each $\ef$ it contains. Assume $x=0$ and, w.l.o.g., $\bh\cdot\ef\geq0$ for all $\ef\in\E$. Then, since for each $\ef\in\E$ we have $\bu-\ef\in\E$ we find $\bh\cdot(\bu-\ef)\geq0$ and therefore, $\bh\cdot\ef\leq0$. Thus, we have that $\bh\cdot\ef=0$ for all $\ef\in\E$ meaning $\E$ could not span $\R^{d+1}$.

Since $x\neq0$ we may define $\bo=\bh/x$. We find if $\ef\in H$ then $\ef\cdot\bo=1$. It follows that $\bu\cdot\bo=1$ since $\bu\in F\in H$. Finally, we also have $\ef\cdot\bo\leq1$ for all $\ef\in\E$. In turn this gives $(\bu-\ef)\cdot\bo\leq1$ and thus, $\ef\cdot\bo\geq0$ for all $\ef\in\ES$. Therefore, we have $\bo\in W(\E)$. 
\end{proof}

\minmax*
\begin{proof}
For Statement~(\ref{item:F}), let $M$ be a minimal exposed face of $\cS$. Then there exists $\ff\in\ES$ such that $M=\{\bo\in\cS|\ff\cdot\bo=1\}$ by Lemma~\ref{lem:sfaces}. Let $\bm\in\relint(M)$ and $F=\{\ef\in\ES|\ef\cdot\bm=1\}$. The set $F$ is an exposed face of $\ES$ contained $\bu$, now we must show that it is maximal.

Let $F'$ be the actual face containing $F$ and $\ff'\in\relint(F')$. Then we define $M'=\{\bo\in\cS|\ff'\cdot\bo=1\}$. We will show $M'=M$. Let $\bo'\in M'$ then, by definition, we have $\ff'\cdot\bo'=1$. Now, since $\ff'$ is in the relative interior of $F'$ we have that $\ef\cdot\bo'=1$ for all $\ef\in F'$ including $\ff\cdot\bo=1$. By the definition of $M$ it follows that $\bo'\in M$ and we can conclude that $M'\subseteq M$. However, $M'$ is an exposed face of $\cS$ and $M$ is minimal therefore we have $M'=M$.

Finally, given $\ef\in F'$ we have $\ef\cdot\bm=1$ since $\bm\in\relint(M)=\relint(M')$. Thus, by the definition of $F$ we also have $\ef\in F$ and thus, $F'\subseteq F$. Since, conversely, $F\subseteq F'$ by the definition of $F'$, we have that $F=F'$ is an actual face of $\ES$.

For Statement~(\ref{item:A}), let $\bo\in M$. Then $A^{\ES}_\bo$ is an exposed face of $\ES$ containing $F$, since $\ef\cdot\bo=1$ for all $\ef\in F$. Then it follows from the actuality of $F$ that $F=A^{\ES}_\bo$.
\end{proof}

\maxmin*
\begin{proof}
By Lemma~\ref{lem:ESfaces} (see the note below the lemma), there exists $\tilde{\bo}\in\cS$ such that $F$ is the actual set of $\tilde{\bo}$ (in $\ES$), i.e.~$F=A^{\ES}_{\tilde{\bo}}=\{\ef\in\ES|\ef\cdot\tilde{\bo}=1\}$. Let $\ff\in\relint(F)$. Then we have $\ff\cdot\tilde{\bo}=1$ and $\ff\cdot\bo\leq1$ for all $\bo\in\cS$. Thus, $\ff$ defines a supporting hyperplane of $\cS$, intersecting at a non-empty exposed face $M=\{\bo\in\cS|\ff\cdot\bo=1\}\ni\tilde{\bo}$. It follows that $M$ satisfies the necessary and sufficient condition from the lemma as follows. If $\bo\in M$ then $\ff\cdot\bo=1$. Then since $\ff\in\relint{F}$ we have $\ef\cdot\bo=1$ for all $\ef\in F$. Conversely, if $\ef\cdot\bo=1$ for all $\ef\in F$, we have that $\ff\cdot\bo=1$ and thus, $\bo\in M$. 

We will now show that the exposed face $M$ is minimal. Suppose $M$ is not minimal. Then there exists an exposed face $M'\subset M$. By Lemma~\ref{lem:sfaces}, there exists $\ff'\in\ES$ such that $M'=\{\bo\in\cS|\ff'\cdot\bo=1\}$. Let $\boldsymbol{m}\in M
\setminus M'$. Then $\ff'\cdot\boldsymbol{m}<1$ which implies that $\ff'\notin F$ since $\boldsymbol{m}\in M$. On the other hand, let $\bo'\in M'$, and $F'=A^{\ES}_{\bo'}=\{\ef\in\ES|\ef\cdot\bo'=1\}$. Since $\bo'\in M$ we have that $F\subseteq F'$. However, we also have that $\ff'\in F'$. Thus, $F'$ is an exposed face of $\ES$ strictly containing $F$ and contradicting the maximality of $F$.
\end{proof}

\exposed*
\begin{proof}
Let $\cS$ be a convex set in which all extremal points are exposed. Consider a minimal exposed face $M$ containing a point $\bo$. Let $\bo=\sum_{j=1}^Np_j\bo_j$ be a convex decomposition of $\bo$ where $\bo_j$ are extremal points of $\cS$ for $1\leq j \leq N$. Any exposed face containing $\bo$ also contains the points $\bo_j$, however since $\bo_j$ are exposed and $M$ is minimal we find $\bo_j=\bo$ for all $1\leq j\leq N$ and $\bo$ is an exposed point of $\cS$. 
\end{proof}
\end{document}